\newtheorem{thm}{Theorem}
\newtheorem{lem}{Lemma}
\DeclareMathOperator{\Tr}{Tr}
\newcommand{\RY}{\mathop{\text{RY}}}
\newcommand{\RZ}{\mathop{\text{RZ}}}
\newcommand{\CNOT}{\mathop{\text{CNOT}}}
\DeclareMathOperator{\DQC}{\mathsf{DQC}1}
\DeclareMathOperator{\NP}{\mathsf{NP}}
\DeclareMathOperator{\hsigma}{\hat{\sigma}}
\DeclareMathOperator{\FF}{F}
\DeclareMathOperator{\hFide}{\widehat{F}}
\begin{document}
 
\title{On exploring the potential of quantum auto-encoder for learning quantum systems}

\author{Yuxuan~Du 
        and~Dacheng~Tao,~\IEEEmembership{Fellow,~IEEE}

\IEEEcompsocitemizethanks{\IEEEcompsocthanksitem Y. Du and D. Tao are with College of Computing and Data Science, Nanyang Technological University, 639798, Singapore, and JD Explore Academy, Beijing, 101111, China.\protect\\
 
E-mail: duyuxuan123@gmail.com} 
}

\markboth{Journal of \LaTeX\ Class Files,~Vol.~14, No.~8, August~2021}
{Shell \MakeLowercase{\textit{et al.}}: A Sample Article Using IEEEtran.cls for IEEE Journals}

\maketitle

\begin{abstract}
The frequent interactions between quantum computing and machine learning revolutionize both  fields. One prototypical achievement  is the quantum auto-encoder (QAE), as the leading strategy to relieve the curse of dimensionality ubiquitous in the quantum world. Despite its attractive capabilities, practical applications of QAE have yet largely unexplored. To narrow this knowledge gap, here we devise three effective QAE-based learning protocols to address three classically computational hard learning problems when learning quantum systems, which are low-rank state fidelity estimation, quantum Fisher information estimation, and Gibbs state preparation. Attributed to the versatility of QAE, our proposals can be readily executed on near-term quantum machines. Besides, we analyze the error bounds of the trained protocols and showcase the necessary conditions to provide practical utility from the perspective of complexity theory. We conduct numerical simulations to confirm the effectiveness of the proposed three protocols. Our work sheds new light on developing advanced quantum learning algorithms to accomplish hard quantum physics and quantum information processing tasks.
\end{abstract}

\begin{IEEEkeywords}
Auto-encoders, Quantum machine learning,  Quantum computing, Quantum information processing.
\end{IEEEkeywords}

\section{Introduction}\label{sec:introduction}
\IEEEPARstart{M}{anipulating} quantum systems and predicting their properties are of fundamental importance for developing quantum technologies \cite{biamonte2017quantum,carrasquilla2017machine,butler2018machine}. Despite the importance, exactly capturing all information of a given quantum system is computationally hard, since the space of a quantum state exponentially scales with the number of particles. For instance, quantum state tomography requests $\Theta(4^N)$ copies to determine an unknown $N$-qubit quantum state \cite{haah2017sample}; the recognition of entanglement of quantum states has proven to be NP-hard \cite{gurvits2003classical}. With the aim of understanding quantum world, two leading approaches, i.e., shadow-tomography-based methods \cite{aaronson2019gentle,aaronson2019shadow,huang2020predicting} and data compression methods \cite{gross2010quantum,rozema2014quantum,Shabani2011Est}, are developed to  extract useful information from investigated quantum systems from different angles. Namely, the former concentrates on predicting  certain properties of an unknown quantum state, e.g., estimating the expectation values of quantum observables. In other words, this approach aims to extract partial information from the explored quantum system. The latter delves into fully reconstructing a special class of quantum systems satisfying the low-rank property. Concisely, this approach first compresses the quantum state into a low-dimensional subspace, followed by quantum tomography with a polynomial runtime complexity.

 Among various quantum data compression techniques, the quantum auto-encoder (QAE) \cite{romero2017quantum} and its variants \cite{cao2021noise,lamata2018quantum}, as the quantum extension of classical auto-encoders \cite{goodfellow2016deep},   have recently attracted great attention. Conceptually, QAE harnesses a variational quantum circuit $U(\bm{\theta})$ to encode an initial input state into a low-dimensional latent space so that the input state can be recovered from this compressed state by a decoding operation $U(\bm{\theta})^{\dagger}$, as shown in Figure \ref{fig:QAE}. Analogous to their classical counterparts, QAE is flexible, which allows us to implement it on noisy intermediate-scale quantum (NISQ) machines \cite{preskill2018quantum} to solve different learning problems. In particular, QAE carried out on linear optical systems has been applied to reduce qutrits to qubits with low error levels \cite{pepper2019experimental} and to compress two-qubit states into one-qubit states \cite{huang2020realization}. Additionally, QAE has been realized on superconducting quantum processors to compress three-qubit quantum states \cite{lamata2018quantum}. Besides quantum data compression, initial studies have exhibited that QAEs can be exploited to denoise GHZ states which subject to spin-flip errors and random unitary noise \cite{bondarenko2020quantum} and to conduct efficient error-mitigation \cite{zhang2021generic} and data encoding \cite{bravo2021quantum}.  Despite the above achievements, practical applications of QAE in the field of quantum science remain largely unknown. 

To conquer aforementioned issues, here we revisit QAE from its spectral properties and explore its potentials in learning quantum systems \cite{gebhart2023learning,dunjko2022quantum}, as the prerequisite to use quantum techniques to benefit various fields such as spectroscopy, chemistry, and material sciences. This exploration also aligns with the recent perspective in the quantum computing community, suggesting that quantum machine learning has the potential to significantly accelerate quantum data analysis compared to classical methods \cite{cerezo2022challenges,huang2022quantum}. To do so, we exhibit that the compressed state of QAE living in the latent space conveys sufficient spectral information of the input $N$-qubit state $\rho$. This means that when $\rho$ is \textit{low-rank}, the spectral information of $\rho$ can be efficiently acquired via applying quantum tomography \cite{nielsen2010quantum} to the \textit{compressed state} in polynomial runtime with $N$. This result hints a number of applications of QAE in learning quantum systems, because numerous tasks in this field can be recast to capture the spectral information of the given state. Based on this understanding, we devise three QAE-based learning protocols, i.e., QAE-based fidelity estimator (Section \ref{sec:QAE-fide}), QAE-based quantum Fisher information estimator (Section \ref{sec:QFI-est}), and QAE-based Gibbs state solver (Section \ref{sec:QAE-gibbs}), to address three classically computational hard problems in learning quantum systems, i.e., low-rank state fidelity estimation \cite{miszczak2009sub},   quantum Fisher information estimation  \cite{petz2011introduction}, and   quantum Gibbs state preparation \cite{brandao2017quantum,poulin2009sampling}, respectively.  Our study highlights the importance of utilizing the idea generated from the classical machine learning community to tackle hard problems in quantum science, which could have an impact on both computer science and quantum physics.

We summarize our main contributions  as follows.
 
\begin{figure*} 
	\centering
\includegraphics[width=0.76\textwidth]{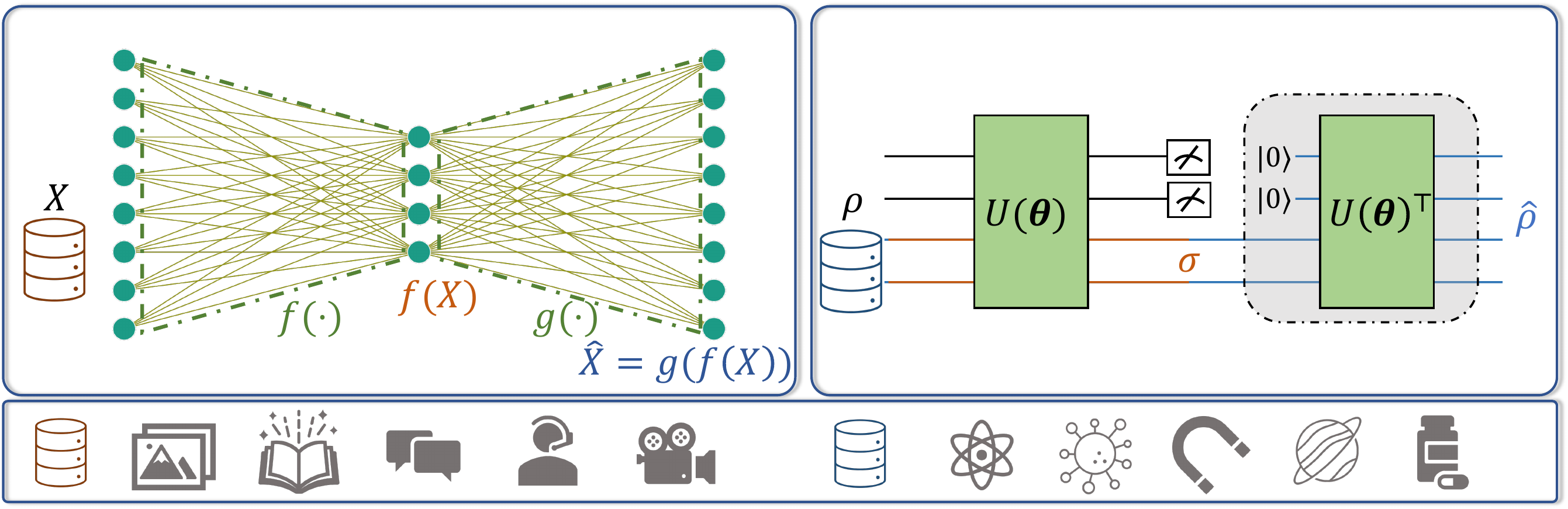}
\caption{\small{\textbf{The paradigm of AE and QAE.} 
The upper left panel illustrates the classical auto-encoder (AE). The upper right panel shows the quantum AE (QAE). In the training process, all quantum operations in the grey box are removed. The state $\sigma$ corresponds to the compressed state of $\rho$ (see Section \ref{subsec:QAE} for explanations). The lower panel depicts the data type suited for AE and QAE. Two typical datasets explored by QAE are collected from chemistry and material science.
}}
\label{fig:QAE}  
\end{figure*}

\begin{itemize}
\item We present three specific learning protocols that utilize QAE to accelerate quantum data analysis in low-rank state fidelity estimation, quantum Fisher information estimation, and quantum Gibbs state preparation. These protocols hold promise for applications in quantum materials, biochemistry, and high-energy physics.
\item We establish theoretical foundations of the proposed three protocols. In particular, for QAE-based fidelity estimator, we prove that the estimation error  of our proposal is upper bounded by $\sqrt{2\varsigma}$, where $\varsigma$ is a computationally-efficient quantity measuring the discrepancy between the target state and the reconstructed state of QAE after $T$ iterations. For QAE-based quantum Fisher information estimator, we prove that its estimation error is upper bounded by $\sqrt{2\varsigma}/\tau^2$, where $\tau$ is the shifted parameter of the source being probed.  For QAE-based Gibbs state solver, we prove that its estimation error, measured by the discrepancy between the produced state and target state, yields $O(\sqrt{\beta + 2\varsigma\log\frac{2}{1 - \sqrt{1 - \delta}}})$, where $\beta$ is the inverse temperature and $\delta$ is the training loss of QAE.  
\item We present the necessary conditions when using our protocols to gain potential computational merits in the view of complexity theory. For example, in fidelity estimation,   constraining the number of iterations of training QAE as $T \sim  O(poly(N))$, our proposal can surpass classical methods if $\varsigma < \text{poly}(1/N)$, supported by the results that fidelity estimation of low-rank states is $\DQC$-hard \cite{cerezo2020variational}.   
\item  We conduct extensive numerical simulations up to $8$ qubits to demonstrate the effectiveness of our proposals.   
\end{itemize}

\medskip
The organization of this paper is as follows. In Section \ref{sec:backII}, we present backgrounds of quantum computing, recap the mechanism of AE and QAE, and introduce the related work. Then, we analyze the spectral property of QAE in Section \ref{sec:specIII}. By leveraging the obtained spectral property, we propose three QAE-based learning protocols, i.e., QAE-based fidelity estimator,  QAE-based quantum Fisher information estimator, and QAE-based  quantum Gibbs state solver in Sections \ref{sec:QAE-fide}, \ref{sec:QFI-est}, and \ref{sec:QAE-gibbs}, respectively. In Section \ref{sec:prac_implement}, we discuss how to realize our proposals on NISQ machines. Last, in Section \ref{sec:conclusion}, we conclude this study and discuss future research directions.    
 
\section{Background}\label{sec:backII}
\subsection{Basic concepts of quantum computing}

The fundamental unit in quantum computing is qubit, which refers to a two-dimensional vector. Under Dirac notation, a qubit state is denoted by  $\ket{\bm{\alpha}}=a_0\ket{0}+a_1\ket{1}\in\mathbb{C}^2$, where $\ket{0}=[1,0]^{\top}$ and $\ket{1}=[0,1]^{\top}$ specify two unit bases, and the coefficients $a_0, a_1 \in \mathbb{C}$ satisfy $|a_0|^2+|a_1|^2=1$. An $N$-qubit state is denoted by $ \ket{\Psi}=\sum_{i=1}^{2^N}a_i\ket{i} \in \mathbb{C}^{2^N}$, where $\ket{i}\in\mathbb{R}^{2^N}$ is the unit vector whose $i$-th entry being 1 and other entries are 0, and $\sum_{i=0}^{2^N-1}|a_i|^2=1$ with $a_i\in\mathbb{C}$. Following conventions, an $N$-qubit separable state can  be written as  $\ket{\bm{a}_1,...\bm{a}_N}$ or $\ket{\bm{a}_1}...\ket{\bm{a}_N}$. We sometimes denote $\ket{\bm{a}}\ket{\bm{b}}$ as $\ket{\bm{a}}_A\ket{\bm{b}}_B$, which means that the qubits $\ket{\bm{a}}_A$ ($\ket{\bm{b}}_B$) are assigned in the quantum register $A$ ($B$). Besides Dirac notation, the density matrix can be used to describe more general qubit states. For example, the density matrix of the state $\ket{\Psi}$ is $\rho=\ket{\Psi}\bra{\Psi}\in \mathbb{C}^{2^N\times 2^N}$. For a set of qubit states $\{p_i, \ket{\psi_i} \}_{i=1}^m$ with $p_i>0$, $\sum_{i=1}^{m}p_i=1$, and $\ket{\psi_i}\in \mathbb{C}^{2^N}$ for $\forall i\in [m]$, its density matrix is $\rho=\sum_{i=1}^{m} p_i\rho_i$ with $\rho_i=\ket{\psi_i}\bra{\psi_i}$ and $\Tr({\rho})=1$.
   
Quantum gates refer to unitary transformations and serve as the computational toolkit for quantum circuit models, i.e., an $N$-qubit gate $U\in\mathcal{U}(2^N)$ obeys $UU^{\dagger}=U^{\dagger}U=\mathbb{I}_{2^N}$, where $\mathcal{U}(\cdot)$ stands for the unitary group. Throughout the whole study, we focus on the single-qubit and two-qubit quantum gate set, i.e., $\RZ(\theta) = \begin{psmallmatrix} e^{-i\theta/2} & 0\\ 0 & e^{i\theta/2}\end{psmallmatrix}$,  $\RY(\theta) = \begin{psmallmatrix} \cos{\theta/2} & -\sin{\theta/2}\\ \sin{\theta/2} & \cos{\theta/2} \end{psmallmatrix}$, and $\CNOT=\begin{psmallmatrix} 1 & 0 & 0 & 0 \\ 0 & 1 & 0 & 0 \\ 0 & 0 & 0 & 1 \\  0 & 0 & 1 & 0 \end{psmallmatrix}$.  Note that this gate set is universal such that these quantum gates can be used to reproduce the functions of all the other quantum gates \cite{nielsen2010quantum}.  

Quantum measurements are employed to extract quantum information of the evolved quantum state into the classical form. In this study, we concentrate on the positive operator-valued measures (POVM), which is described by a collection of positive operators $0\preceq \Pi_i$ satisfying $\sum_i\Pi_i=\mathbb{I}$.  Specifically, applying the measurement $\{\Pi_i\}$ to the state $\rho$, the probability of outcome $i$ is given by $\Pr(i) = \Tr(\rho\Pi_i)$. 

 \subsection{Quantum machine learning}
Quantum machine learning (QML) employs quantum computers to gain computational advantages in learning both classical data (e.g., images, videos, and texts) and quantum data (e.g., dynamics and correlation properties of quantum systems) \cite{biamonte2017quantum}. Learning quantum systems is significant as quantum technology development relies on creating and manipulating quantum systems \cite{huang2021information,huang2022provably}, which become exponentially complex with size, making their full description intractable \cite{gebhart2023learning,dunjko2022quantum}. Thus, a key research goal is to develop QML models that surpass classical models in solving crucial problems in quantum science such as quantum computing, communication, and metrology \cite{pezze2018quantum, degen2017quantum}.

A leading candidate of QML models in NISQ era is  variational quantum algorithms (VQAs), also known as quantum neural networks (QNNs) \cite{cerezo2020variational2,beer2020training,farhi2018classification,benedetti2019parameterized,du2018expressive,mitarai2018quantum,Philip2022Efficient,Carvalho2023Parameterized}.  Conceptually, a VQA contains four principal blocks: the input state $\rho_0$, a variational quantum circuit (or equivalently ansatz) $U(\bm{\theta})$,  the measurement $M$,  and a loss function $\mathcal{L}(\bm{\theta})$. In the training stage, VQA follows an iterative manner to proceed optimization, where the classical optimizer continuously leverages the output of the quantum circuit, i.e., $\Tr(MU(\bm{\theta})\rho_0U(\bm{\theta})^{\dagger})$, to update trainable parameters $\bm{\theta}$ to minimize $\mathcal{L}(\bm{\theta})$. The updating rule at the $t$-th iteration   is $\bm{\theta}^{(t+1)}=\bm{\theta}^{(t)} - \eta \nabla_{\bm{\theta}} \mathcal{L}(\bm{\theta})$, where $\eta$ is the learning rate and $\nabla_{\bm{\theta}} \mathcal{L}(\bm{\theta})$ is the gradients of $\mathcal{L}(\bm{\theta})$ that can be acquired via parameter shift rule \cite{schuld2019evaluating}. The updating is terminated if $\mathcal{L}(\bm{\theta})$ is converged or the  iteration $t$ exceeds the threshold.

 \subsection{Classical auto-encoder}\label{sebsec:clc-ae} 
Auto-encoder (AE) is a technique in the context of machine learning  to learn efficient codings of unlabeled data and has been broadly applied to clustering \cite{guo2019adaptive}, spatio-temporal data mining \cite{wang2020deep}, information retrieval \cite{salakhutdinov2009semantic}, anomaly detection \cite{zhou2017anomaly}, and image processing \cite{vincent2008extracting}. Intuitively, AE completes the data compression by mapping the given input from a high-dimensional space to a low-dimensional space in the sense that the input can be recovered from this dimension-reduced point. As shown in the upper left panel of Figure \ref{fig:QAE}, two basic elements of AE are the encoder $f(\cdot)$ and the decoder $g(\cdot)$, which are realized by neural networks \cite{goodfellow2016deep}. Suppose that the given dataset is $X\in\mathbb{R}^{m\times n}$ whose $i$-th column $\bm{x}_i\in \mathbb{R}^m$ represents the $i$-th data point. The encoder $f:\mathbb{R}^m\rightarrow \mathbb{R}^k$ with $k\ll m$ compresses each data point into a $k$-dimensional space $f(\bm{x}_i)$, while the decoder $g:\mathbb{R}^k \rightarrow \mathbb{R}^m$ maps the latent representation $f(\bm{x}_i)$ to the reconstructed data $\hat{\bm{x}}_i=g(f(\bm{x}_i))$ \cite{goodfellow2016deep}.   The optimization of AE amounts to minimizing the reconstruction error between   $X$ and its reconstruction $\hat{X}$ whose $i$-th column is $\hat{\bm{x}}_i$.

\subsection{Quantum auto-encoder}\label{subsec:QAE} 
Quantum auto-encoder (QAE), as the quantum analog of AE, is composed of the quantum encoder and the quantum decoder as exhibited in the upper right panel of Figure \ref{fig:QAE}. The construction of the quantum encoder (or decoder) is completed by variational quantum circuits. Mathematically, define an $N$-qubit input state as $\rho\in\mathbb{C}^{2^N\times 2^N}$ and the number of qubits that represents the latent space as $K$ with $K<N$. Let the quantum encoder be a variational quantum circuit $U(\bm{\theta})$, where $U(\bm{\theta})=\prod_{l=1}^LU_l(\bm{\theta})$ and $U_l(\bm{\theta})$ is a sequence of parameterized single-qubit and two-qubits gates with $\bm{\theta}$ being tunable parameters.  Under the above setting, QAE aims to find optimal parameters $\bm{\theta}^*$ that minimize the loss function   
 \begin{eqnarray}\label{eqn:loss_G1}
	&& \mathcal{L}(U(\bm{\theta}),\rho) = \Tr\left(M_GU(\bm{\theta})\rho U(\bm{\theta})^{\dagger} \right),
\end{eqnarray}
where $M_G = \left(\mathbb{I}_N-(\ket{0}\bra{0})^{\otimes(N-K)}\otimes \mathbb{I}_K \right)$ refers to the measurement operator and $\mathbb{I}_N\in\mathbb{R}^{2^N\times 2^N}$ is the identity.  While both AEs and QAEs use encoders and decoders to process data, QAEs are  specifically designed for quantum computers and are therefore better suited for learning quantum data, as illustrated in the lower panel of Figure \ref{fig:QAE}, in contrast to AEs which are typically used for classical data.

\subsection{Related work}
 Previous literature related to our work can be divided into two categories. The first category is QAEs and the second group is other quantum learning methods to accomplish the tasks explored in this study. Here we separately compare them with our work.

The variants of QAEs have been proposed to compensate certain deficiencies existed in the original QAE proposed by \cite{romero2017quantum}. In particular, the proposal developed in \cite{lamata2018quantum,ding2019experimental} combines QAE with quantum adders, which can approximately add two unknown quantum states supported in different quantum systems. Ref.~\cite{srikumar2021clustering} proposed a hybrid QAE to tackle clustering and classification task. On par with the model design, initial studies have been conducted to explore applications of QAEs in state compression \cite{pepper2019experimental,huang2020realization}, denoising  \cite{bondarenko2020quantum} and error-mitigation \cite{zhang2021generic}. Different from prior studies that mainly focus on the feasibility of QAEs, our work explores their practical applications in learning quantum systems.

Several studies have designed other variational quantum algorithms to  address the learning problems exploited in this study. For the task of fidelity estimation, the protocol proposed by Ref.~\cite{chen2021variational} can be applied to full-rank quantum states; the protocol proposed by Ref.~\cite{cerezo2020variational} intended to the low-rank fidelity estimation. Compared with these two method, our proposal is more  resource-efficient.  For the task of Gibbs state preparation, a relevant study is  Ref.~\cite{wang2020variational}. In that study, the protocol leverages the technique of Taylor truncation to estimate Von-Neumann entropy, which introduces the estimation error. Conversely, our approach does not require such an estimation.

\section{Spectral properties of QAE}\label{sec:specIII} 
We now revisit QAE from its spectral properties. Denote the rank of the  state $\rho$ as $r^*$ and its spectral decomposition as  
\begin{equation}\label{eqn:eigen-rho}
	\rho=\sum_{i=1}^{r^*}\lambda_i \ket{\psi_i}\bra{\psi_i}~\text{with}~\sum_{i=1}^{r^*}\lambda_i =1,
\end{equation}
where $\lambda_i$ and $\ket{\psi_i}$ are the $i$-th eigenvalues and eigenvectors, respectively. Without loss of generality, we suppose  $\lambda_1\geq \lambda_2 \geq ... \geq \lambda_{r^*}$. The optimal $U(\bm{\theta}^*)$ in Eqn.~(\ref{eqn:loss_G1}) is shown in the following theorem whose proof is given in Appendix \ref{Appendix:thm1_opt_res}. 
\begin{thm}\label{thm_QAE_noiseless}
	QAE with the loss function $\mathcal{L}(U(\bm{\theta}),\rho)$ in Eqn.~(\ref{eqn:loss_G1}) has multiple critical (global minimum) points of $U(\bm{\theta}^*)$. Let $r^*=2^{K}$. The generic form of $U(\bm{\theta}^*)$ is
	\begin{equation}\label{eqn:thm1_0}
		U(\bm{\theta}^*) := \sum_{i=1}^{r^*} \ket{0}^{\otimes N-K}\ket{\varpi_i}\bra{\psi_i} + \sum_{j=r^*+1}^{2^N}\ket{\phi_j}\bra{\psi_j},
	\end{equation}
	where $\{ \ket{0}^{\otimes N-K}\ket{\varpi_i}\}_{i=1}^{r^*}\cup \{\ket{\phi_j}\}_{j=r^*+1}^{2^N} $ and $\{\ket{\psi_j}\}_{j=1}^{2^N}$ are two sets of orthonormal vectors and $\{\ket{\psi_i}\}_{i=1}^{r^*}$ is defined in Eqn.~(\ref{eqn:eigen-rho}). 
\end{thm}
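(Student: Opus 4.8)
The plan is to reduce the minimization of $\mathcal{L}(U(\bm{\theta}),\rho)$ to a pure subspace-selection problem. First I would use $\Tr\big(U(\bm{\theta})\rho U(\bm{\theta})^{\dagger}\big)=\Tr(\rho)=1$ to rewrite the loss as $\mathcal{L}(U(\bm{\theta}),\rho)=1-\Tr\big(P_0\,U(\bm{\theta})\rho U(\bm{\theta})^{\dagger}\big)$, where $P_0:=(\ket{0}\bra{0})^{\otimes(N-K)}\otimes\mathbb{I}_K$ is the orthogonal projector fixed by $M_G$ onto the $2^K$-dimensional ``latent'' subspace $\mathcal{H}_0$ spanned by $\{\ket{0}^{\otimes(N-K)}\ket{x}\}_{x\in\{0,1\}^K}$. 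Hence minimizing $\mathcal{L}$ is equivalent to maximizing $\Tr\big(U(\bm{\theta})^{\dagger}P_0 U(\bm{\theta})\,\rho\big)$. Since $P_0$ is a rank-$2^K$ projector and $U(\bm{\theta})$ is unitary, $\Pi_U:=U(\bm{\theta})^{\dagger}P_0 U(\bm{\theta})$ is again an orthogonal projector of rank $2^K$; conversely, for any $2^K$-dimensional subspace $V\subseteq\mathbb{C}^{2^N}$ one can build a unitary (send an orthonormal basis of $V$ to an orthonormal basis of $\mathcal{H}_0$ and complete it) whose $\Pi_U$ equals the projector $\Pi_V$ onto $V$. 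So, assuming the circuit ansatz is expressive enough to realize such unitaries, the problem becomes: maximize $\Tr(\Pi_V\rho)$ over all $2^K$-dimensional subspaces $V$.

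Next I would bound this quantity using the spectral decomposition (\ref{eqn:eigen-rho}): $\Tr(\Pi_V\rho)=\sum_{i=1}^{r^*}\lambda_i\bra{\psi_i}\Pi_V\ket{\psi_i}=\sum_{i=1}^{r^*}\lambda_i\,\|\Pi_V\ket{\psi_i}\|^2\le\sum_{i=1}^{r^*}\lambda_i=1$, since $\|\Pi_V\ket{\psi_i}\|\le 1$ and $\lambda_i\ge 0$. Writing the gap as $\sum_i\lambda_i(1-\|\Pi_V\ket{\psi_i}\|^2)=0$, a sum of nonnegative terms, and using $\lambda_i>0$ for all $i\le r^*$ (because $\rho$ has rank $r^*$), equality forces $\Pi_V\ket{\psi_i}=\ket{\psi_i}$ for every $i$, i.e. $\mathrm{span}\{\ket{\psi_i}\}_{i=1}^{r^*}\subseteq V$; and because $\dim V=2^K=r^*$ in the regime of the theorem, this is equivalent to $V=\mathrm{span}\{\ket{\psi_i}\}_{i=1}^{r^*}$. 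Therefore $\mathcal{L}$ attains its global minimum value $0$, and $U(\bm{\theta}^*)$ is a global minimizer if and only if $U(\bm{\theta}^*)^{\dagger}P_0 U(\bm{\theta}^*)$ is the projector onto $\mathrm{span}\{\ket{\psi_i}\}_{i=1}^{r^*}$, equivalently iff $U(\bm{\theta}^*)$ maps $\mathrm{span}\{\ket{\psi_i}\}_{i=1}^{r^*}$ onto $\mathcal{H}_0$ and its orthogonal complement onto $\mathcal{H}_0^{\perp}$.

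To obtain the explicit form (\ref{eqn:thm1_0}) I would then exhibit the most general unitary with this property. Pick any orthonormal basis $\{\ket{\varphi_j}\}_{j=1}^{r^*}$ of $\mathrm{span}\{\ket{\psi_i}\}_{i=1}^{r^*}$ — which is exactly the asserted fact that the $\ket{\psi_i}$ are linearly spanned by $\{\ket{\varphi_j}\}_{j=1}^{r^*}$ — and extend it to an orthonormal basis $\{\ket{\varphi_j}\}_{j=1}^{2^N}$ of $\mathbb{C}^{2^N}$; pick any orthonormal basis $\{\ket{\varpi_i}\}_{i=1}^{r^*}$ of the $K$-qubit space, so that $\{\ket{0}^{\otimes(N-K)}\ket{\varpi_i}\}_{i=1}^{r^*}$ is an orthonormal basis of $\mathcal{H}_0$, and complete it with any orthonormal basis $\{\ket{\phi_j}\}_{j=r^*+1}^{2^N}$ of $\mathcal{H}_0^{\perp}$. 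Then $U(\bm{\theta}^*)=\sum_{i=1}^{r^*}\ket{0}^{\otimes(N-K)}\ket{\varpi_i}\bra{\varphi_i}+\sum_{j=r^*+1}^{2^N}\ket{\phi_j}\bra{\varphi_j}$ carries an orthonormal basis to an orthonormal basis, hence is unitary, and by construction satisfies the global-minimum condition; conversely every global minimizer arises this way, which is precisely (\ref{eqn:thm1_0}). The ``multiple critical points'' claim then follows because this construction contains genuine continuous freedoms — the choices of the orthonormal systems $\{\ket{\varphi_j}\}_{j=1}^{r^*}$, $\{\ket{\varpi_i}\}_{i=1}^{r^*}$, $\{\ket{\varphi_j}\}_{j=r^*+1}^{2^N}$ and $\{\ket{\phi_j}\}_{j=r^*+1}^{2^N}$, each ranging over a unitary group — so the set of global minimizers is a positive-dimensional manifold, and all of its elements are critical points of the smooth map $\bm{\theta}\mapsto\mathcal{L}(U(\bm{\theta}),\rho)$.

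I expect the only delicate point to be the ``only if'' direction, namely that an \emph{arbitrary} global minimizer must possess the block structure of (\ref{eqn:thm1_0}), not merely that unitaries of that form are optimal; this is exactly what the dimension count $\dim V=2^K=r^*$ in the equality analysis buys us, since it pins $V$ down to $\mathrm{span}\{\ket{\psi_i}\}_{i=1}^{r^*}$ uniquely. I would also remark on why the clean statement is restricted to $r^*=2^K$: if $r^*<2^K$ the optimal $V$ is no longer unique (any $2^K$-dimensional $V$ containing $\mathrm{span}\{\ket{\psi_i}\}$ attains $\mathcal{L}=0$), while if $r^*>2^K$ a Ky Fan argument gives minimum value $1-\sum_{i=1}^{2^K}\lambda_i>0$, attained by $V=\mathrm{span}\{\ket{\psi_i}\}_{i=1}^{2^K}$, the span of the top $2^K$ eigenvectors.
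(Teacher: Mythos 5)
Your proof is correct and follows essentially the same route as the paper: rewrite the loss as $1-\Tr\bigl(P_0\,U\rho U^{\dagger}\bigr)$, expand against the spectral decomposition of $\rho$, and characterize when the trace reaches $1$. Your handling of the necessity direction — writing the gap as the nonnegative sum $\sum_i\lambda_i\bigl(1-\|\Pi_V\ket{\psi_i}\|^2\bigr)$ so that each term must vanish, and then using the dimension count $\dim V=2^K=r^*$ to pin down $V=\mathrm{span}\{\ket{\psi_i}\}_{i=1}^{r^*}$ — is in fact cleaner and more rigorous than the paper's, which establishes the converse via an operator-ordering restriction (Eqn.~(\ref{eqn:opt_U_thm3})) on non-optimal encoders.
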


Given access to the optimal quantum encoder $U(\bm{\theta}^*)$ in Eqn.~(\ref{eqn:thm1_0}), the compressed state of QAE yields  
\begin{equation}
		\sigma^* = \Tr_{E}\left(\frac{M_G U(\bm{\theta}^*)\rho U(\bm{\theta}^*)^{\dagger} M_G^{\dagger}}{\Tr(M_G U(\bm{\theta}^*)\rho U(\bm{\theta}^*)^{\dagger})} \right), 
\end{equation} 
where the term `$\Tr_E(\cdot)$' represents the partial trace of all qubits except for the $K$ latent qubits \cite{nielsen2010quantum}. The spectral information between $\sigma^*$ and $\rho$ is closely related, as indicated by the following lemma, where the corresponding proof is provided  in Appendix \ref{append:proof_lem1}.
\begin{lem}\label{lem:sigma_eigen}
	The compressed state has the same eigenvalues with $\rho$, i.e., $\sigma^* = \sum_{i=1}^{r^*}\lambda_i\ket{\varpi_i}\bra{\varpi_i}$. Moreover, the subspace spanned by the eigenvectors $\{\ket{\psi_i}\}_{i=1}^{r^*}$ of $\rho$ can be recovered by $U(\bm{\theta}^*)$ and eigenvectors $\{\ket{\varpi}_i\}_{i=1}^{r^*}$.  
\end{lem}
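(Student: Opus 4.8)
The plan is to push the spectral decomposition of $\rho$ through the optimal encoder $U(\bm{\theta}^*)$ supplied by Theorem~\ref{thm_QAE_noiseless} and simply read off the spectrum of the latent state. Two immediate consequences of Theorem~\ref{thm_QAE_noiseless} set up the computation. First, since $\{\ket{\phi_j}\}_{j=r^*+1}^{2^N}\cup\{\ket{0}^{\otimes N-K}\ket{\varpi_i}\}_{i=1}^{r^*}$ is orthonormal and $\langle 0^{\otimes N-K}|0^{\otimes N-K}\rangle=1$, the latent vectors $\{\ket{\varpi_i}\}_{i=1}^{r^*}$ form an orthonormal set in $\mathbb{C}^{2^{K}}$. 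Second, because $\{\ket{\psi_i}\}_{i=1}^{r^*}$ lies in $\mathrm{span}\{\ket{\varphi_j}\}_{j=1}^{r^*}$ and both are orthonormal bases of the same $r^*$-dimensional subspace, there is an $r^*\times r^*$ unitary $V$ with $\ket{\psi_i}=\sum_{j=1}^{r^*}V_{ji}\ket{\varphi_j}$ (one may alternatively use the gauge freedom in the generic form of Theorem~\ref{thm_QAE_noiseless} to take $\ket{\varphi_i}=\ket{\psi_i}$, i.e.\ $V=\mathbb{I}$; the argument below covers both).

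Next I would evaluate $U(\bm{\theta}^*)\rho U(\bm{\theta}^*)^{\dagger}$ term by term. Acting with Eqn.~(\ref{eqn:thm1_0}) on an eigenvector kills the $j>r^*$ part and gives $U(\bm{\theta}^*)\ket{\psi_i}=\ket{0}^{\otimes N-K}\otimes\ket{\widetilde{\varpi}_i}$ with $\ket{\widetilde{\varpi}_i}:=\sum_{j=1}^{r^*}V_{ji}\ket{\varpi_j}$, and $\{\ket{\widetilde{\varpi}_i}\}_{i=1}^{r^*}$ is again orthonormal since $V$ is unitary. Summing over $i$,
\begin{equation}
U(\bm{\theta}^*)\rho U(\bm{\theta}^*)^{\dagger}=(\ket{0}\bra{0})^{\otimes(N-K)}\otimes\Big(\sum_{i=1}^{r^*}\lambda_i\ket{\widetilde{\varpi}_i}\bra{\widetilde{\varpi}_i}\Big),
\end{equation}
a product state whose trash register is pure. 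Hence the loss in Eqn.~(\ref{eqn:loss_G1}) attains its minimal value $0$ at $U(\bm{\theta}^*)$, all of the weight of $U(\bm{\theta}^*)\rho U(\bm{\theta}^*)^{\dagger}$ sits in the latent subspace $\ket{0}^{\otimes N-K}\otimes\mathbb{C}^{2^{K}}$, and $\Tr\big(U(\bm{\theta}^*)\rho U(\bm{\theta}^*)^{\dagger}\big)=1$ makes the normalisation in the definition of $\sigma^*$ automatic.

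Then I would take the partial trace over the $N-K$ trash qubits. Since the state above is a product with pure trash factor $\ket{0}^{\otimes N-K}$, $\Tr_E$ returns the latent factor unchanged: $\sigma^*=\sum_{i=1}^{r^*}\lambda_i\ket{\widetilde{\varpi}_i}\bra{\widetilde{\varpi}_i}$. As the $\ket{\widetilde{\varpi}_i}$ are orthonormal, this is a genuine spectral decomposition, so $\sigma^*$ carries exactly the eigenvalues $\{\lambda_i\}_{i=1}^{r^*}$ of $\rho$ with the same multiplicities (identify the $\ket{\varpi_i}$ of the statement with $\ket{\widetilde{\varpi}_i}$; when some $\lambda_i$ coincide only the corresponding eigenspaces, not the individual vectors, are pinned down, which is all that is asserted). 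For the recovery claim I would apply $U(\bm{\theta}^*)^{\dagger}$ to the (now tomographically accessible) latent eigenvectors: Eqn.~(\ref{eqn:thm1_0}) gives $U(\bm{\theta}^*)^{\dagger}\big(\ket{0}^{\otimes N-K}\ket{\varpi_i}\big)=\ket{\varphi_i}$, so $\mathrm{span}\{U(\bm{\theta}^*)^{\dagger}(\ket{0}^{\otimes N-K}\ket{\varpi_i})\}_{i=1}^{r^*}=\mathrm{span}\{\ket{\varphi_i}\}_{i=1}^{r^*}=\mathrm{span}\{\ket{\psi_i}\}_{i=1}^{r^*}$, i.e.\ the full eigenvector subspace of $\rho$ is reconstructed from $U(\bm{\theta}^*)$ and $\{\ket{\varpi_i}\}$.

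The calculations are short; the only delicate point is the bookkeeping between the abstract orthonormal bases produced by Theorem~\ref{thm_QAE_noiseless} and the physical eigenbasis of $\rho$ — concretely, justifying the identification $\ket{\varpi_i}\leftrightarrow\ket{\widetilde{\varpi}_i}$ (either via the gauge choice $\ket{\varphi_i}=\ket{\psi_i}$ or by carrying $V$ through) and being careful about degenerate eigenvalues. A secondary point to address in a sentence is the self-consistency of the normalisation in the definition of $\sigma^*$, which follows from the loss vanishing at the minimiser.
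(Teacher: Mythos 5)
Your proposal is correct and follows essentially the same route as the paper's Appendix B: substitute the spectral forms of $U(\bm{\theta}^*)$ and $\rho$, observe that the transformed state is a product whose trash register is the pure state $\ket{0}^{\otimes N-K}$ so the partial trace returns $\sum_{i}\lambda_i\ket{\varpi_i}\bra{\varpi_i}$, and then apply $U(\bm{\theta}^*)^{\dagger}$ to the latent eigenvectors to recover $\mathrm{span}\{\ket{\psi_i}\}_{i=1}^{r^*}$. Your explicit bookkeeping of the gauge unitary $V$ relating $\{\ket{\psi_i}\}$ to $\{\ket{\varphi_j}\}$ (and the resulting $\ket{\widetilde{\varpi}_i}$) is in fact slightly more careful than the paper's proof, which silently takes $\braket{\varphi_i|\psi_k}=\bm{\delta}_{ik}$ in its third equality.
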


The result of Lemma \ref{lem:sigma_eigen} indicates that when the input state $\rho$ is low-rank, the spectral information of $\sigma^*$ can be efficiently obtained by quantum tomography associated with classical post-processing. In particular, the size of the $K$-qubit compressed state $\sigma^*$ scales with $r^*\times r^*$, which implies that the classical description of $\sigma^*$ can be obtained in $O(\text{poly}(r^*))$ runtime. Moreover, the collected classical information suggests that the spectral decomposition of $\sigma^*$, i.e., $\{\lambda_i\}$ and $\{\ket{\varpi_i}\}$, can be computed in $O(\text{poly}(r^*))$ runtime. Taken together, the spectral information of a low-rank state $\rho$ can be efficiently acquired via QAE in the optimal case.

\begin{figure*}
	\centering
\includegraphics[width=0.99\textwidth]{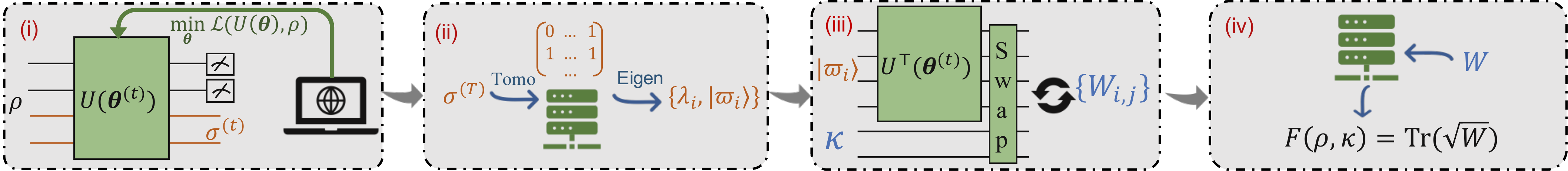}
	\caption{\small{\textbf{The paradigm of the QAE-based fidelity estimator.} The implementation of the QAE-based fidelity estimator is composed of four steps. (i) QAE in Eqn.~(\ref{eqn:loss_G1}) is applied to optimize $\bm{\theta}$ with $T$ iterations. (ii) After optimization, the compressed state $\sigma^{(T)}$ in Eqn.~(\ref{eqn:sigma_T_1}) is extracted into the classical register via quantum state tomography. Once the classical form of $\sigma^{(T)}$ is accessible, the classical Eigen-solver is employed to acquire its spectral information, i.e., $\{\lambda_i, \ket{\varpi_i}\}$. (iii) When the spectral information of $\rho$ is available and the state $\kappa$ is accessible, we can calculate $W$ in Eqn.~(\ref{eqn:QAE-W}) based on Ref.~\cite{cerezo2020variational}. (iv) Given access to $W$, the fidelity $F(\rho,\kappa)$ can be effectively obtained. }}
	\label{fig:QAE-fide1}
\end{figure*}

 In the subsequent sections, we elucidate how to employ QAE, especially the acquired spectral information, to facilitate the low-rank state fidelity estimation, quantum Fisher information estimation, and Gibbs state preparation tasks. Due to the intrinsic hardness, any such improvement would not only be immensely valuable for various downstream tasks in quantum science such as quantum computing, quantum sensing, and quantum metrology but also indicates the practical utility of near-term quantum computers. For clarity, we unify some notations throughout the rest study. Denote the spectral decomposition of the variational quantum circuits $U(\hat{\bm{\theta}})$ as
\begin{equation}\label{eqn:thm_est_fide_1}
	U(\hat{\bm{\theta}})  := \sum_{i=1}^r  \ket{0}^{\otimes N-K}\ket{\hat{\varpi}_i}\bra{\hat{\varphi}_i} + \sum_{j=r+1}^{2^N} \ket{\hat{\phi}_j}\bra{\hat{\varphi}_j},
\end{equation}	
where $\hat{\bm{\theta}}$ refers to  the trained parameters, and $\{\ket{0}^{\otimes N-K}\ket{\hat{\varpi}_i}\}_{i=1}^r \bigcup \{\ket{\hat{\phi}_j}\}_{j=r+1}^{2^N}$ and $\{\ket{\hat{\varphi}_i}\}_{i=1}^{2^N}$ are two sets of orthonormal vectors with $r=2^K$. Note that $r$ may not be equal to $r^*$ when the training procedure is not optimal. Besides, since the Eigen-decomposition of $U(\hat{\bm{\theta}})$ is not unique and  it operates with the projector $M_G$ in Eqn.~(\ref{eqn:loss_G1}), we set $U(\hat{\bm{\theta}})$  a similar form as shown in Eqn.~(\ref{eqn:thm1_0}) for ease of analysis. Based on Eqn.~(\ref{eqn:thm_est_fide_1}), the explicit form of the compressed state of QAE in the non-optimal case is 
\begin{equation}\label{eqn:thm2_sigma}
	\hsigma = \Tr_{E}\left(\frac{M_G U(\hat{\bm{\theta}})\rho U(\hat{\bm{\theta}})^{\dagger} M_G^{\dagger}}{\Tr(M_G U(\hat{\bm{\theta}})\rho U(\hat{\bm{\theta}})^{\dagger})} \right) = \sum_{i=1}^r\hat{\lambda}_i \ket{\hat{\varpi}_i}\bra{\hat{\varpi}_i},
\end{equation}
where $\hat{\lambda}_i={\braket{\hat{\varphi}_i|\rho|\hat{\varphi}_i}}/\left({\sum_{j=1}^r \braket{\hat{\varphi}_j|\rho|\hat{\varphi}_j}}\right)$.

 \section{QAE-based fidelity estimator}\label{sec:QAE-fide} 
The quantification of the quantum state fidelity plays a central role in verifying and characterizing the states prepared by a quantum processor \cite{nielsen2010quantum} and studying quantum phase transitions \cite{gu2010fidelity}.  In other words, devising an effective fidelity estimator  is the precondition of using quantum computer to benefit chemistry, quantum physics, and material sciences.   Despite its importance, quantum state fidelity calculation is computationally hard for both classical and quantum devices \cite{cerezo2020variational}. Mathematically, the fidelity of two states $\rho$ and $\kappa$  yields
\begin{equation}\label{eqn:fide}
	\FF(\rho,\kappa) = \Tr\left(\sqrt{\sqrt{\rho}\kappa\sqrt{\rho}}\right)=\|\sqrt{\rho}\sqrt{\kappa}\|_1.
\end{equation}
  However, when we slightly relax the problem and target to the low-rank state fidelity estimation (i.e., the rank of $\rho$ is no larger than $O(poly(N))$ and $\kappa$ is unconstrained), quantum computers may provide potential advantages over classical devices, supported by the following lemma. 
\begin{lem}[Proposition 5, \cite{cerezo2020variational}]\label{lem:DQC-1}
The problem of the $N$-qubit low-rank state fidelity estimation within the precision $\pm 1/\text{poly(N)}$ is $\DQC$-hard. 
\end{lem}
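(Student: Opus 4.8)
Although the statement is Proposition~5 of \cite{cerezo2020variational}, I sketch the reduction I would use. The plan is to reduce, in classical polynomial time, the canonical $\DQC$-complete problem --- estimating the normalized trace $\Tr(W)/2^m$ of an $m$-qubit unitary $W$ given as a $\mathrm{poly}(m)$-size circuit, to additive precision $\pm 1/\mathrm{poly}(m)$ (Knill and Laflamme; Shor and Jordan) --- to low-rank fidelity estimation. Since the former problem is $\DQC$-complete, exhibiting such a reduction shows that estimating $F(\rho,\kappa)$ for low-rank $N$-qubit states to precision $\pm 1/\mathrm{poly}(N)$ is $\DQC$-hard.

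First I would pre-process $W$ so that its normalized trace is forced to be a nonnegative real number, without leaving the class of polynomial-size circuits. Adjoining two ancilla qubits and letting $V$ be block-diagonal with diagonal blocks $W$, $W^{\dagger}$, $\mathbb{I}_m$, $\mathbb{I}_m$ (equivalently, a doubly-controlled $W$ followed by a doubly-controlled $W^{\dagger}$) yields a unitary $V$ on $m' = m+2$ qubits with
\begin{equation}\label{eqn:lemDQC-trace}
	\Tr(V)/2^{m'} = \tfrac{1}{2}\big( 1 + \mathrm{Re}\,\Tr(W)/2^{m} \big) \in [0,1],
\end{equation}
where I used $\Tr(W) + \Tr(W^{\dagger}) = 2\,\mathrm{Re}\,\Tr(W)$ and $|\mathrm{Re}\,\Tr(W)| \le 2^{m}$. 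Thus recovering $\Tr(V)/2^{m'}$ to inverse-polynomial additive error recovers $\mathrm{Re}\,\Tr(W)/2^{m}$, with its sign, to inverse-polynomial additive error.

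Next I would realize $\Tr(V)/2^{m'}$ as a fidelity of two pure --- hence rank-one, hence low-rank --- states on $N = 2m'$ qubits. With $\ket{\Phi} = 2^{-m'/2}\sum_{i}\ket{i}\ket{i}$ the maximally entangled state (prepared by $m'$ Bell pairs), set $\rho = \ket{\Phi}\bra{\Phi}$ and $\kappa = (\mathbb{I}_{m'}\otimes V)\ket{\Phi}\bra{\Phi}(\mathbb{I}_{m'}\otimes V)^{\dagger}$; both are produced by $\mathrm{poly}(N)$-size circuits once the circuit for $W$ is in hand. Combining $\bra{\Phi}(\mathbb{I}_{m'}\otimes V)\ket{\Phi} = \Tr(V)/2^{m'}$, the identity $F(\ket{a}\bra{a},\ket{b}\bra{b}) = |\braket{a|b}|$ that follows from Eqn.~(\ref{eqn:fide}), and Eqn.~(\ref{eqn:lemDQC-trace}), one gets $F(\rho,\kappa) = |\Tr(V)|/2^{m'} = \Tr(V)/2^{m'}$. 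Since $N = \Theta(m)$, estimating $F(\rho,\kappa)$ to additive error $1/\mathrm{poly}(N)$ delivers $\mathrm{Re}\,\Tr(W)/2^{m}$ to additive error $1/\mathrm{poly}(m)$, which completes the reduction.

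The one genuinely delicate step is the passage in the previous paragraph: a fidelity exposes only $|\Tr(V)|$, so without the gadget of Eqn.~(\ref{eqn:lemDQC-trace}) pinning the normalized trace to a nonnegative real value one would irretrievably lose the sign (and imaginary part) of $\Tr(W)$ and the reduction would fail. The remaining checks --- that $\rho,\kappa$ stay rank-one, that the preparation circuits remain polynomial (a doubly-controlled copy of a poly-size circuit is poly-size), and that additive precision is preserved up to the constant in Eqn.~(\ref{eqn:lemDQC-trace}) under $N = \Theta(m)$ --- are routine.
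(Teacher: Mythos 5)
The paper does not actually prove this lemma: it is imported verbatim as Proposition~5 of \cite{cerezo2020variational}, so there is no in-paper argument to compare yours against. Judged on its own merits, your reduction is sound and is the standard route for this kind of hardness claim: reduce from normalized-trace estimation (the canonical $\DQC$-complete problem), realize the normalized trace as the overlap $\braket{\Phi|(\mathbb{I}\otimes V)|\Phi}=\Tr(V)/2^{m'}$ via the maximally entangled state, and read it off as the Uhlmann fidelity of two rank-one (hence low-rank) states prepared by $\mathrm{poly}$-size circuits. Your block-diagonal gadget $V=\mathrm{diag}(W,W^{\dagger},\mathbb{I},\mathbb{I})$, which pins $\Tr(V)/2^{m'}=\tfrac{1}{2}\left(1+\mathrm{Re}\,\Tr(W)/2^{m}\right)$ to a nonnegative real so that the modulus in $F=\left|\braket{\Phi|(\mathbb{I}\otimes V)|\Phi}\right|$ loses nothing, correctly disposes of the one genuine subtlety you identify; an equivalent alternative is to argue directly that estimating $|\Tr(\cdot)|/2^{m}$ of $\mathrm{poly}$-size circuits is itself $\DQC$-hard via a controlled-$W$ interference gadget. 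Two minor points to make explicit: you should state that estimating $\mathrm{Re}\,\Tr(W)/2^{m}$ alone (rather than the full complex trace) is already $\DQC$-hard, which is how the Knill--Laflamme problem is usually phrased and is what your gadget targets; and your instance makes both states rank one, which is stronger than the lemma requires (the paper's convention only asks that $\rho$ be low-rank) and hence harmless.
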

\noindent Recall that classical computers can not efficiently simulate $\DQC$ models unless the polynomial-time hierarchy collapses to the second level \cite{fujii2018impossibility}. The result of Lemma \ref{lem:DQC-1} shows that if a quantum algorithm can achieve low-rank state fidelity estimation in \textit{polynomial runtime} with an error below $\text{poly}(1/N)$, it has \textit{a runtime advantage} compared to the optimal classical method.

Here we devise a QAE-based fidelity estimator to complete the low-rank state fidelity estimation task and discuss its potentials. The underlying concept of our proposal is as follows. Recall an equivalent formula of  fidelity is
\begin{equation}\label{eqn:QAE-W}
\FF(\rho,\kappa) = \Tr\Big(\sqrt{ \sum_{i,j} W_{i,j}\ket{\psi_i}\bra{\psi_j} }\Big),	
\end{equation}
where $ W_{i,j} = \sqrt{\lambda_i\lambda_j} \braket{\psi_i|\kappa|\psi_j}$ for $\forall i\in [r^*]$. In other words, calculating $\FF(\rho,\kappa)$ amounts to acquiring $W\in \mathbb{C}^{r^*\times r^*}$, which can be effectively achieved once the state $\kappa$ and the eigenvalues and eigenvectors of $\rho$ are accessible.   The spectral information of $\rho$ provided by QAE allows us to  estimate the fidelity $\FF(\rho,\kappa)$. 

\subsection{Implementation of QAE-based fidelity estimator}\label{sec:implementation-QAE-FE}
 We now elaborate on the protocol of the QAE-based fidelity estimator. As shown  in Figure \ref{fig:QAE-fide1}, the proposed method consists of four steps. 
\begin{enumerate}
	\item We apply QAE formulated in Eqn.~(\ref{eqn:loss_G1}) to compress the given state $\rho$. Let the total number of training iterations be $T$. The updating rule of the parameters $\bm{\theta}$ at the $t$-th iteration yields
\begin{equation}\label{eqn:QAE-update}
	\bm{\theta}^{(t+1)} = 	\bm{\theta}^{(t)} - \eta \frac{\partial \mathcal{L}(U(\bm{\theta}^{(t)}), \rho)}{\partial \bm{\theta}^{(t)}},
\end{equation}
where $\eta$ refers to the learning rate. The gradients $\partial \mathcal{L}(U(\bm{\theta}^{(t)}), \rho)/\partial \bm{\theta}^{(t)}$ can be acquired by using the parameter shift rule \cite{mitarai2018quantum,schuld2019evaluating}. After the optimization is completed, the trained parameters $\bm{\theta}^{(T)}$ are exploited to prepare the compressed state in Eqn.~(\ref{eqn:thm2_sigma}),  i.e.,
\begin{equation}\label{eqn:sigma_T_1}
		\sigma^{(T)} = \Tr_{E}\left(\frac{M_G U(\bm{\theta}^{(T)})\rho U(\bm{\theta}^{(T)})^{\dagger} M_G^{\dagger}}{\Tr(M_G U(\bm{\theta}^{(T)} )\rho U(\bm{\theta}^{(T)})^{\dagger})} \right). 
\end{equation} 
\item The quantum state tomography techniques \cite{nielsen2010quantum} are applied to $\sigma^{(T)}$ with  size $2^K\times 2^K$ to obtain its classical description. For simplicity, here we only focus on the ideal case with the infinite number of measurements so that the state $\sigma^{(T)}$ can be exactly recorded in the classical register. When the measurement error is considered, the likelihood recovering strategy can be used to efficiently find the closest probability distribution of $\sigma^{(T)}$ \cite{smolin2012efficient}. Once the classical form of $\sigma^{(T)}$ is in hand, its spectral information can be effectively obtained when the number of latent qubits $K$ is small. With a slight abuse of notation, the eigenvalues and eigenvectors of $\sigma^{(T)}$ are denoted by $\{\hat{\lambda}_i\}_{i=1}^{r}$ and $\{\hat{\varpi}_i\}_{i=1}^{r}$ with $r=2^K$, respectively, i.e.,
\begin{equation}
	\sigma^{(T)} = \sum_{i=1}^r\hat{\lambda}_i \ket{\hat{\varpi}_i}\bra{\hat{\varpi}_i}.
\end{equation}
Given access to $\{\ket{\hat{\varpi}_i}\}$ and $U(\bm{\theta}^{(T)})$, we can prepare quantum states $\{\ket{\hat{\varphi}_i}\}$, i.e.,
\begin{equation}\label{eqn:res-state-QAE}
	\ket{\hat{\varphi}_i} = U(\bm{\theta}^{(T)})^{\dagger}\ket{0}^{\otimes (N-k)}\ket{\hat{\varpi}_i}. 
\end{equation} 
 
\item The classical spectral information of $\sigma^{(T)}$, the variational quantum circuits $U(\bm{\theta}^{(T)})^{\dagger}$, and the state $\kappa$ are employed to estimate $W_{ij}$ in Eqn.~(\ref{eqn:QAE-W}). The element $W_{ij}$ is approximated by  
\begin{equation}\label{eqn:est-W}
	 \widehat{W}_{ij}= \sqrt{\hat{\lambda}_i\hat{\lambda}_j}\langle \hat{\varphi}_i|\kappa| \hat{\varphi}_j \rangle,~\forall i, j\in[r].
\end{equation}
Possible physical implementations to acquire $\langle \hat{\varphi}_i|\kappa| \hat{\varphi}_j \rangle$ are Swap test circuit and the destructive Swap test circuit \cite{subacsi2019entanglement}, which is identical to \cite{cerezo2020variational}.  
\item  The fidelity $F(\rho,\kappa)$ is approximated via the classical post-processing. In particular, given access to the matrix $ \widehat{W}$ with size $2^K\times 2^K$, we use $\hFide(\rho,\kappa)= \Tr(  \widehat{W}^{-1/2})$ to estimate $\FF(\rho,\kappa)$. 
\end{enumerate}
Seeking the optimal parameter $\bm{\theta}^*$ of QAE with the zero training loss is challenging with a finite number of iterations. Suppose the training loss of QAE after $T$ iterations as $\mathcal{L}(U(\bm{\theta}^{(T)}),\rho)=\delta$.  Denote $r=2^K$ and the compressed state of $\rho$ as $\sigma^{(T)}\in\mathbb{C}^{r\times r}$ and its spectral decomposition as $\{\hat{\lambda}_i\}$ and $\{\ket{\hat{\varpi}_i}\}$.  Given the spectral information of $\sigma^{(T)}$, the fidelity $\FF(\rho,\kappa)$ is estimated by  
\begin{equation}\label{eqn:est-fide}
	\hFide(\rho,\kappa) = \Tr((\sum_{i,j}  \sqrt{ \hat{\lambda}_i\hat{\lambda}_j} \braket{\hat{\varphi}_i|\kappa|\hat{\varphi}_i} \ket{\hat{\varphi}_i}\bra{\hat{\varphi}_j})^{-1/2}), 
\end{equation}
where $\ket{\hat{\varphi}_i} = U( \bm{\theta}^{(T)})\ket{0^{\otimes N-K},\varpi_i}$. An important question is quantifying the estimation error, i.e., $|\hFide(\rho,\kappa)-\FF(\rho,\kappa)|$. The following theorem indicates that the estimation error can be bounded by the quantity 
\begin{equation}\label{eqn:est-fide-error-quantity}
\varsigma = \sqrt{1- \FF^2(\rho, \hat{\rho})},
\end{equation}
where $\FF(\hat{\rho}, \rho)\equiv \hFide(\rho, \rho)$ refers to the fidelity between the reconstructed state of the trained QAE $\hat{\rho}$ and the target state $\rho$. The evaluation of $\varsigma$ is computationally efficient, since as the calculation of $\hFide(\rho, \kappa)$, $\FF(\hat{\rho}, \rho)$ can be efficiently attained by using Eqn.~(\ref{eqn:est-fide}).

\begin{thm}\label{thm:fide_bound}
Following notations in Eqns.~(\ref{eqn:est-fide}) and (\ref{eqn:est-fide-error-quantity}), the QAE-based fidelity estimator after $T$ iterations returns the estimated fidelity $\hFide(\rho,\kappa)$  with 
 \begin{equation}\label{eqn:thm2}
 	\hFide(\rho,\kappa) - \sqrt{2\varsigma} \leq  \FF(\rho, \kappa)\leq  \hFide(\rho,\kappa) + \sqrt{2\varsigma}.
 \end{equation}
\end{thm} 
\noindent The proof of Theorem \ref{thm:fide_bound} is provided in Appendix \ref{append:thm2}. The achieved results indicate how the computationally-efficient quantity $\varsigma$ reflects the upper bound of the estimation error $\hFide(\rho,\kappa)-\FF(\rho,\kappa)$ when the optimization of QAE is \textit{not optimal}. Moreover, the estimation error in Eqn.~(\ref{eqn:thm2}) hints  conditional advantages of our proposal. In particular, after a polynomial number of iterations with $T \sim  O(poly(N))$, our proposal could reach runtime advantages when   $\sqrt{2\varsigma}\leq 1/poly(N)$. This is because such a restriction on $T$ leads that the total runtime complexity of our proposal polynomially scales with $N$ to attain a low estimation error, which is unachievable by classical methods, according to Lemma \ref{lem:DQC-1}.  

Another intriguing property of the quantity $\varsigma$ is that it is lower bounded by the training loss of QAE $\delta$, as elaborated on Appendix~\ref{append:train-loss-vs-error-quantity}. In this regard, to suppress the estimation error of QAE-based fidelity estimator, the key  is suppressing $\delta$. Possible strategies towards this goal encompass the employment of problem-oriented ansatz to implement the quantum encoder $U(\bm{\theta})$ \cite{du2020quantumQAS} and the adoption of advanced optimization techniques \cite{cerezo2020cost,zhang2020toward} to improve the trainability of QAE. Moreover, it is noteworthy that according to the computational complexity theory, there may exist a class of computationally-hard quantum states can not be efficiently compressed by QAE in which $ \varsigma >  1/\text{poly}(N)$.

\begin{figure*} 
	\centering
\includegraphics[width=0.948\textwidth]{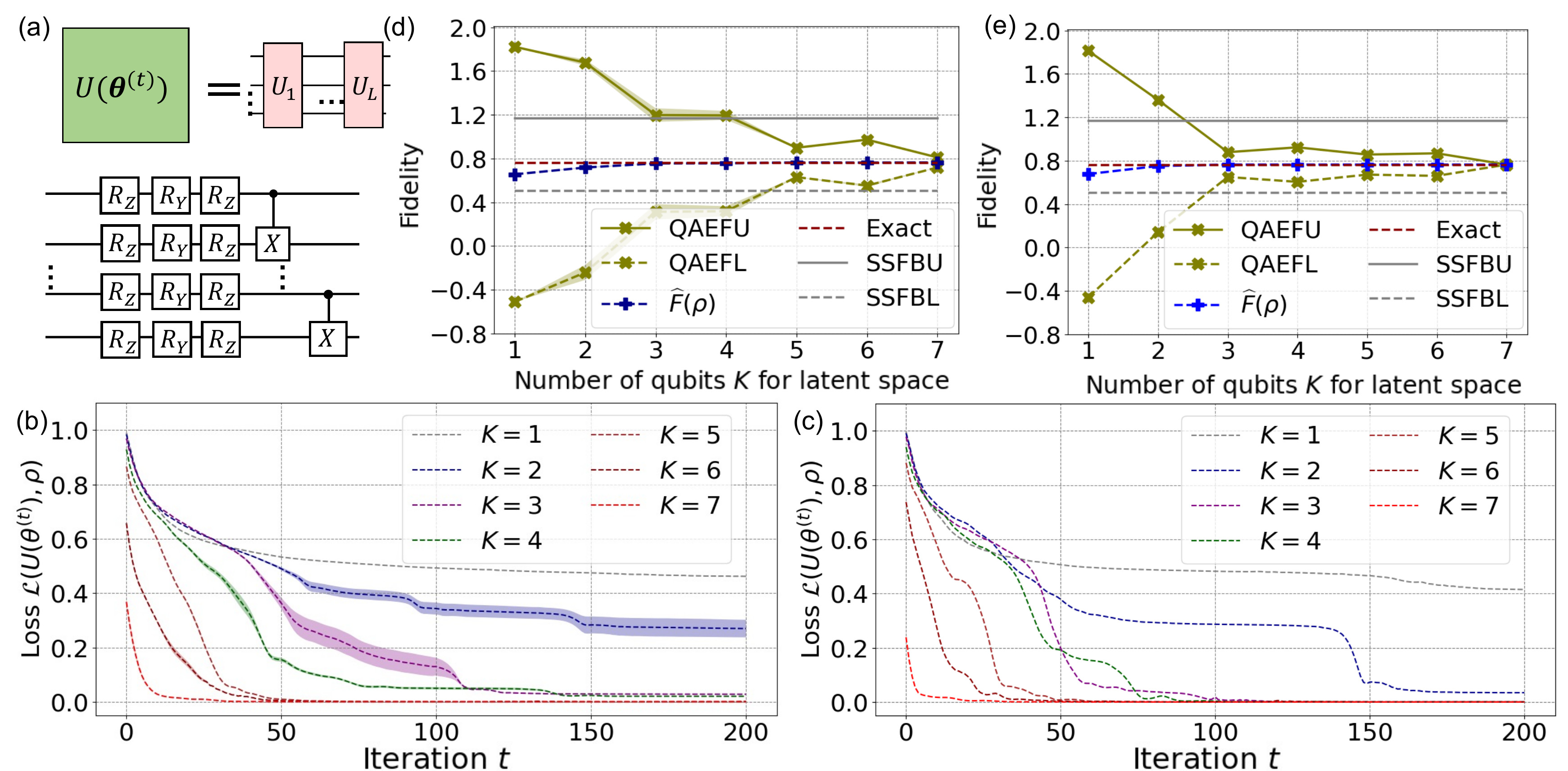}
\caption{\small{\textbf{Simulation results for low-rank states fidelity estimation with the number of qubits $N=8$}. (a) The implementation of variational circuit $U(\bm{\theta})$ used in the quantum encoder. The trainable parameters are contained in the rotational single-qubit gates RZ and RY. The entangled layer is composed of CNOT gates. (b) The average training loss of the QAE-based fidelity estimator in the first stage. The label `$K = a$' refers to the employed number of latent qubits of the QAE-based fidelity estimator is $a$. The shadow region refers to the variance. (c) The top-1 training loss achieved by QAE-based fidelity estimator. All notations follow the same meaning as those in (b). (d) The average fidelity estimation results returned by QAE-based fidelity estimator. The labels QAEFL and QAEFU refer to the bound in Theorem 2, respectively. The label `SSFBU' (`SSFBL') is super-fidelity (sub-fidelity) bounds.  (e) The top-1 fidelity estimation results achieved by QAE-based fidelity estimator. All notations follow the same meaning as those in (d). }}
	\label{fig:QAE-fide-low-append}
\end{figure*}

\noindent{\textbf{Remark}}.  We emphasize two key points regarding the QAE-based fidelity estimator. First, our analysis aligns with convention of NISQ algorithms to highlight the conditional advantages of our approach, contingent upon the fulfillment of specific requirements \cite{cerezo2022challenges,schuld2021supervised}. This contrasts from the study of fault-tolerant quantum algorithms, where the deterministic advantage should be demonstrated by analyzing an explicit runtime complexity bound and showcasing provable speedups over the best classical counterparts \cite{harrow2017quantum}. Second, our proposal distinguishes itself from the classical shadow \cite{huang2020predicting}, as it cannot calculate fidelity between two mixed states. Namely, the classical shadow is specifically designed to estimate the expectation values of   observables $\{O_i\}$ with $\{\Tr(\rho O_i)\}$. However, the fidelity defined in Eqn.~(\ref{eqn:fide}) can only be reduced to this form when one of the states is a pure state and its classical description is known.

\subsection{Numerical simulations}          
We conduct numerical simulations to evaluate performance of QAE-based fidelity estimator. 

\noindent\textbf{The construction of the employed quantum states.} We synthesize two pairs of states and employ QAE-based Fidelity estimator to calculate their fidelity. The first pair of states contains two eight-qubit state $\rho$ and $\sigma$ with $N=8$. The construction of   $\rho$ and $\kappa$ can be decomposed into two steps. In the first step, we prepare two pure states, i.e., $\ket{\Psi^{(0)}} = \ket{0}^{\otimes 8}$ and $\ket{\Psi^{(1)}} = \bm{\alpha}_1\ket{10...0}+\bm{\alpha}_2\ket{01...0}+...+\bm{\alpha}_8\ket{00...1}$, where the coefficients are fixed to be $\bm{\alpha}_1=\bm{\alpha}_2=...=\bm{\alpha}_8 = 1/8$. Once the pure state $\ket{\Psi}$ (e.g., $\ket{\Psi^{(0)}}$ or $\ket{\Psi^{(1)}}$) is prepared, the noisy channel $\mathcal{N}(\cdot)$ is applied to this state, i.e.,
\begin{equation}\label{eqn:const-fide-states}
	\mathcal{N}(\ket{\Psi}, p, r, a) = \begin{cases}
		\ket{\Psi}, ~\text{with probability}~ p;\\
		\mathbb{V}(r),~\text{with probability}~ 1-p,
	\end{cases}
\end{equation}
where $a\in\mathbb{R}$ is a hyper-parameter, $\mathbb{V}(r)\in\mathbb{R}^{2^8\times 2^8}$ refers to a density operator with $\mathbb{V}_{ii} = 1.5^{-ai}$ when $i\leq r$ and $\mathbb{V}_{ii} = 0$ when $i>r$, and $\mathbb{V}_{ij}=0$ when $i\neq j$.  To construct $\rho$ and $\kappa$, we set  $\rho=\mathcal{N}(\ket{\Psi^{(0)}}, p=0.1, r=8, a= 2)$ and $\kappa=\mathcal{N}(\ket{\Psi^{(1)}}, p=0.5, r=17, a= 5)$, respectively.  In this way,   $\rho$ and $\kappa$ correspond to two eight-qubit states with rank $r^*=8$ and $r^*=17$. The exact fidelity between $\rho$ and $\sigma$ is $\FF(\rho,\sigma)=0.765$.

We follow the same construction rule to produce the second pair of states. The number of qubits is $N=8$ and we set the two states to be the same with the rank $2^8$, i.e., $\rho=\sigma$. The parameter setting of the noisy channel is $p=0.5, r=2^8, a= 1/10$. The explicit form of the two states is $\rho=\sigma=\mathcal{N}(\ket{\Psi^{(0)}}, 0.5, 2^8, 1/10)$. The exact fidelity is $\FF(\rho,\sigma)=1$.

\begin{figure*}
	\centering
\includegraphics[width=0.97\textwidth]{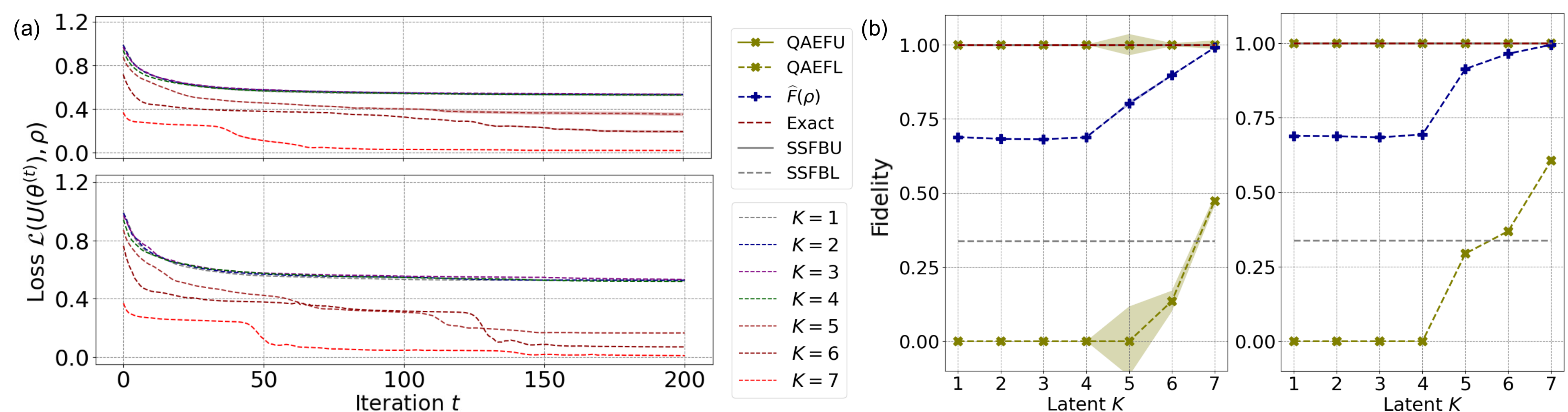}
\caption{\small{\textbf{The simulation results for full-rank states fidelity estimation}. The left panel demonstrates the training loss of the QAE-based fidelity estimator in the first stage when it applies to estimate $\FF(\rho, \rho)$. The right panel illustrates the estimated fidelity bounds and SSFB. The meaning of labels is identical to those in Figure~\ref{fig:QAE} and Figure \ref{fig:QAE-fide-low-append}.}}
\label{fig:QAE-fide-full}
\end{figure*}  
\noindent\textbf{Hyper-parameters settings.}  For both of the two pair states, we vary the size of latent space with $K= 1, ..., 7$. The initialized parameters are sampled from Gaussian distribution with the same mean and the differed variance. Namely, for all settings, the mean is set as zero while the variance is set as $0.4\times (N-K)$. The Adam optimizer is employed to complete all simulations with the learning rate being $0.08$.  The implementation of the quantum encoder $U(\bm{\theta})$ employs the hardware-efficient ansatz  
\begin{equation}\label{eqn:append-MPCQ-def}
	U(\bm{\theta})=\prod_{l=1}^LU_l(\bm{\theta}).
\end{equation}
An intuition is shown in the lower panel of Figure \ref{fig:QAE-fide-low-append}(a). The number of iterations is  $T=200$. Considering that  a higher expressive ansatz is necessary to suppress the input state into a lower dimensional space, the layer number of ansatz depends on the number of latent qubits $K$. That is, for the first pair of states, we set  $L= 11 - K$; for the second pair of states, we set $L=13-K$.  To analyze the robustness of our proposal, each setting described above is repeated with five times. 

\noindent\textbf{Code and device information}. The proposed three QAE-based learning protocols are implemented by Python, with PennyLane as the backbone. We conduct all numerical simulations on  TESLA P40 GPUs using the simulation backends provided by PennyLane. The source code is available at \url{https://github.com/yuxuan-du/Quantum-auto-encoders-based-learning-protocols}. 

\noindent\textbf{The sub-fidelity and super-fidelity bounds.} We employ the sub-fidelity and super-fidelity bounds (SSFB) \cite{miszczak2009sub} as reference to benchmark performance of the QAE-based fidelity estimator. Specifically, the sub-fidelity between   $\rho$ and $\kappa$ yields 
\begin{equation}
	\FF_L(\rho, \kappa) = \Tr(\rho \kappa) + \sqrt{2\left((\Tr(\rho \kappa))^2 - \Tr(\rho \kappa\rho\kappa)\right)}. 
\end{equation}
Moreover, the super-fidelity between   $\rho$ and $\kappa$ yields 
\begin{eqnarray}
 \FF_U(\rho, \kappa) = \Tr(\rho\kappa) + \sqrt{(1 - \Tr(\rho^2))(1-\Tr(\kappa^2))}.  
\end{eqnarray}

\noindent\textbf{Simulation results.} The simulation results for the first pair of states are illustrated in Figure \ref{fig:QAE-fide-low-append}(b)-(e). In particular, the subplots (c) and (e) illustrate the average and top-1 training loss of QAE-based fidelity estimator. The top-1 loss at the last iteration ($T=200$) is dramatically reduced with respect to the increase $K$, i.e., the loss values are $0.415$, $0.035$, $4.26\times 10^{-5}$, $1.6\times 10^{-4}$, $1.82\times 10^{-5}$, $2.92\times 10^{-5}$, and $4.57\times 10^{-11}$ for $K=1,2,...,7$, respectively. From the statistical view, when $K\geq 5$, the average training loss $\mathcal{L}(U(\bm{\theta}), \rho)$ fast converges to zero with $\delta<10^{-3}$. When $K=3,4$, the average training loss is around $0.02$. When $K=1,2$, the dimension of the latent space is lower than the rank of $\rho$ and results into the high training loss, where the average losses are $0.46$ and $0.27$, respectively. The shadow region refers to the variance of training loss, which indicates that QAE-based fidelity estimator is robust, especially for the large $K$. Subplots (d)-(e) separately depict the estimated fidelity under the average and top-1 measures, respectively. The estimated fidelity $\hFide(\rho, \kappa)$ highlighted by the blue solid line fast converges to the exact fidelity $\FF(\rho,\kappa)$ highlighted by the red solid line by increasing $K$. Moreover, for both measures, the estimated bounds $\hFide(\rho, \kappa)\pm \sqrt{\varsigma}$ also fast converge to the true fidelity $\FF(\rho,\kappa)$, as indicated by the green solid and dashed line with cross markers. In the measure of top-1 performance, QAE-based Fidelity estimator achieves better performance than SSFB methods (the solid and dashed grey lines) when $K\geq 3$. The above analysis exhibits the potential of our proposal when the number of latent qubits satisfies $2^{K}\geq r^*$.

The simulation results for the second pair of states are shown in Figure \ref{fig:QAE-fide-full}. The subplot (a) shows the training loss of QAE with respect to the different $K$. An observation is that QAE-based fidelity estimator is robust, since its variance can be ignorable. Moreover, under both the average and top-1 measures, the training loss $\delta$ is lower than $0.02$ only when $K=7$. The degraded performance compared to the first task is caused by the fact that $\rho$ is full rank, which is incompressible by QAE. As such, some information in $\rho$ must be lost in the compressed state $\sigma$ when  $2^K < r^*$. As a result, increasing $K$ enables $\sigma$ to preserve more information in $\rho$ and reduce the training error $\delta$. Subplot (b) depicts the estimated fidelity bounds of the QAE-based fidelity estimator, the SSFB, and the exact result. On average, QAE-based fidelity estimator outperforms SSFB when $K=7$. In the measure of top-1 performance, QAE-based fidelity estimator outperforms SSFB when $K\geq 6$.   
          
 \begin{figure*}
\centering
\includegraphics[width=0.98\textwidth]{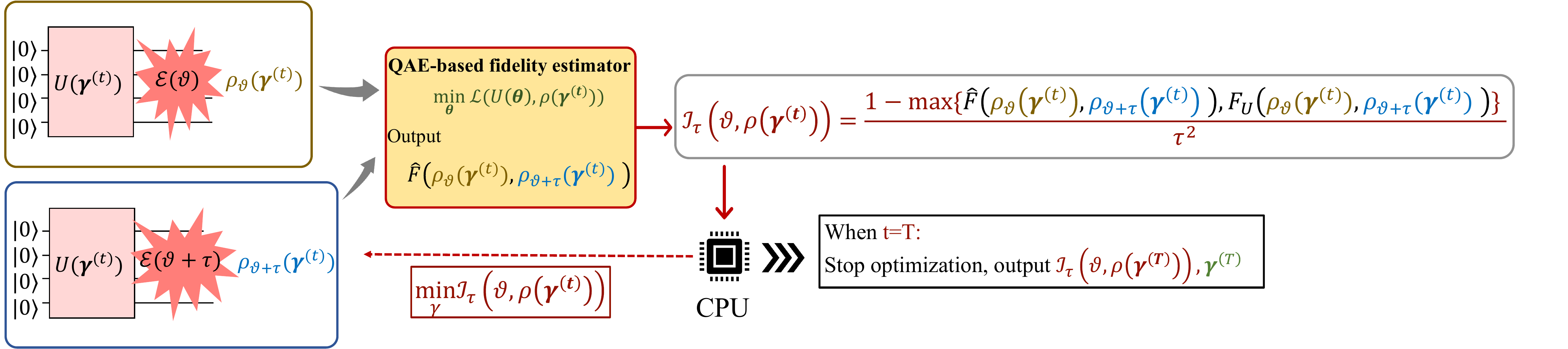}
	\caption{\small{\textbf{The paradigm of the QAE-based QFI estimator}. The QAE-based QFI estimator is implemented on a quantum-classical hybrid system. In the quantum part, a variational quantum circuit $U(\bm{\gamma})$ is employed to prepare the tunable probe state $\rho(\bm{\gamma})$. This probe state is separately interacted with the source described by the parameter $\vartheta$ and $\vartheta+\tau$ to prepare the state $\rho_{\vartheta}(\bm{\gamma})$ and  $\rho_{\vartheta+\tau}(\bm{\gamma})$. The two generated states are fed into the QAE-based fidelity estimator to estimate $\FF(\rho_{\vartheta}(\bm{\gamma}), \rho_{\vartheta+\tau}(\bm{\gamma}))$, as highlighted by the red box. The calculated fidelity is used to compute 
    	$ \mathcal{I}_{\tau}(\vartheta, \rho_{\vartheta}(\bm{\gamma}))$ in Eqn.~(\ref{eqn:q_metro_def}) and then update parameters $\bm{\gamma}$ to maximize this quantity. Repeating the above procedures with $T$ times, the classical processor outputs $\gamma^{(T)}$, which can be used to prepare the estimated probe state $\rho(\bm{\gamma}^{(T)})$.  }}
  \label{fig:QAE-QFI}
\end{figure*}
             
\section{QAE-based QFI estimator}\label{sec:QFI-est}
The high precision of sensing is of great importance to many scientific and real-world applications, which include but not limit to chemical structure identification \cite{lovchinsky2016nuclear} and  gravitational wave detection \cite{mcculler2020frequency}. However, the central limit theorem indicates that the statistical error in the classical scenario must be lower bounded by $\nu^{-1/2}$, where  $\nu$ is the number of measurements. Driven by the significance of improving the precision of sensing, it highly desired to utilize quantum effects to further reduce the statistical error.

Quantum metrology provides a positive response towards the above questions \cite{giovannetti2006quantum,giovannetti2011advances,koczor2020variational}. Through leveraging coherence or entangled quantum states, the statistical error can be reduced by an amount proportional to $\nu^{-1}$.   A fundamental quantity in quantum metrology is \textit{quantum Fisher information} (QFI) \cite{petz2011introduction}, which determines the best possible estimation error for a certain quantum state. Mathematically, let $\rho_{\vartheta}(\bm{\gamma})$ be the evolved state, which is prepared by interacting a tunable $N$-qubit state $\rho(\bm{\gamma})$ with a source that encodes the environmental information in a single parameter $\vartheta$. The QFI $\mathcal{I}(\vartheta;\rho_{\vartheta}(\bm{\gamma}))$ measures the ultimate precision $\Delta \vartheta$, i.e., $(\Delta \vartheta)^2\geq 1/(\nu \mathcal{I}(\vartheta; \rho_{\vartheta}(\bm{\gamma})))$, where $\nu$ is the number of measurements to estimate $\vartheta$. At the expense of computational simplicity, $\mathcal{I}(\vartheta;\rho_{\vartheta}(\bm{\gamma}))$ is   estimated by 
\begin{equation}\label{eqn:q_metro_def}
	\mathcal{I}_{\tau}(\vartheta;\rho_{\vartheta})= 8\frac{1 - \FF(\rho_{\vartheta}, \rho_{\vartheta+\tau})}{\tau^2}. 
\end{equation}  
When $\tau\rightarrow 0$, the approximation error suppresses to $0$, i.e., $\mathcal{I}_{\delta}(\vartheta;\rho_{\vartheta}(\bm{\gamma})) = \mathcal{I}_{\tau\rightarrow 0}(\vartheta;\rho_{\vartheta}(\bm{\gamma}))$. Intuitively, a true state $\rho_{\vartheta}(\bm{\gamma})$ with a high QFI $\mathcal{I}_{\tau}(\vartheta;\rho_{\vartheta}(\bm{\gamma}))$ is sharply different from the error state $\rho_{\vartheta+\tau}(\bm{\gamma})$, which implies that the parameter $\vartheta$ can be effectively estimated via measurement.

As indicated in Lemma \ref{lem:DQC-1}, the calculation of the fidelity $\FF(\rho_{\vartheta}(\bm{\gamma}), \rho_{\vartheta+\tau}(\bm{\gamma}))$ in Eqn.~(\ref{eqn:q_metro_def}) is classically computational hard for general density matrices. Such an inefficiency prohibits the applicability of quantum metrology. To address this issue, Ref.~\cite{beckey2020variational} generalized the fidelity estimation solver \cite{cerezo2020variational} to estimate $\mathcal{I}_{\tau}(\vartheta;\rho_{\vartheta}(\bm{\gamma}))$ with potential runtime advantages. Their proposal employed variational quantum circuits to prepare  states $\rho_{\vartheta}(\bm{\gamma})$ and $ \rho_{\vartheta+\tau}(\bm{\gamma})$ by adjusting $\bm{\gamma}$. Once these two quantum states are prepared, the variational fidelity estimator \cite{cerezo2020variational} is applied to estimate $\mathcal{I}_{\tau}(\vartheta;\rho_{\vartheta}(\bm{\gamma}))$. The optimization is conducted to find optimal $\bm{\gamma}^*$ that maximizes $\mathcal{I}_{\tau}(\vartheta;\rho_{\vartheta}(\bm{\gamma}))$. 

Here we combine variational quantum circuits with the QAE-based fidelity estimator to construct a more advanced QFI estimator. Recall that to balance the computational efficiency and the learning performance, the variational QFI estimator in \cite{beckey2020variational} exploits the variational fidelity estimator to estimate $\FF(\rho_{\vartheta}(\bm{\gamma}), \rho_{\vartheta+\tau}(\bm{\gamma}))$. In this regard, we can employ QAE-based fidelity estimator to devise the QAE-based QFI estimator, which can be integrated into advanced quantum metrology protocols to obtain precise estimation using few measurements.

\subsection{Implementation of the QAE-based QFI estimator}
The implementation of the QAE-based QFI estimator is illustrated in Figure \ref{fig:QAE-QFI}. In particular, our proposal repeats the following procedure with in total $T$ times. At the $t$-th iteration, the variational quantum circuit $U(\bm{\gamma}^{(t)})$ is utilized to generate the probe state $\rho(\bm{\gamma}^{(t)})$. As highlighted by the brown and blue boxes, the probe state interacts with the target source described by the parameter $\vartheta$ and $\vartheta+\tau$ to generate the states  $\rho_{\vartheta}(\bm{\gamma})$ and  $\rho_{\vartheta+\tau}(\bm{\gamma})$, respectively. These two states, which contain the information of the source $\vartheta$, are fed into the QAE-based fidelity estimator presented in Section \ref{sec:QAE-fide}. The estimated fidelity returned by the QAE-based fidelity estimator is employed to compute $ \mathcal{I}_{\tau}(\vartheta, \rho_{\vartheta}(\bm{\gamma}))$ in Eqn.~(\ref{eqn:q_metro_def}) and then update parameters $\bm{\gamma}$ to maximize this quantity. This completes the $t$-th iteration. 

The following lemma quantifies how the training error of QAE influences the performance of the proposed estimator, where the corresponding proof is given in Appendix \ref{append:sec:lem-QFI}. 
\begin{lem}\label{lem:QFI_QAE}
	Following notations in Eqns.~(\ref{eqn:loss_G1}) and  (\ref{eqn:q_metro_def}), let $\tau$ be the shift parameter. When QAE is applied to estimate the fidelity $\FF(\rho_{\vartheta}(\bm{\gamma}), \rho_{\vartheta+\tau}(\bm{\gamma}))$ with the estimation quantity $\varsigma$ in Eqn.~(\ref{eqn:est-fide-error-quantity}). Then the QAE-based QFI estimator returns   $\hat{\mathcal{I}}_{\tau}(\vartheta;\rho_{\vartheta})$ with 
	\begin{equation}
 	 	  \left| \hat{\mathcal{I}}_{\tau}(\vartheta, \rho_{\vartheta}(\bm{\gamma})) - \mathcal{I}_{\tau}(\vartheta, \rho_{\vartheta}(\bm{\gamma})) \right|  \leq   \frac{8\sqrt{2\varsigma}}{\tau^2}. 
 	\end{equation}	
\end{lem}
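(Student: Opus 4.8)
The plan is to reduce Lemma \ref{lem:QFI_QAE} directly to Theorem \ref{thm:fide_bound}, since the estimated QFI is, by construction, nothing but the exact formula \eqref{eqn:q_metro_def} with the true fidelity $\FF(\rho_{\vartheta}(\bm{\gamma}), \rho_{\vartheta+\tau}(\bm{\gamma}))$ replaced by the QAE-based estimate $\hFide(\rho_{\vartheta}(\bm{\gamma}), \rho_{\vartheta+\tau}(\bm{\gamma}))$. First I would write out both quantities explicitly:
\begin{equation}
\mathcal{I}_{\tau}(\vartheta;\rho_{\vartheta}) = 8\,\frac{1 - \FF(\rho_{\vartheta}, \rho_{\vartheta+\tau})}{\tau^2}, \qquad \hat{\mathcal{I}}_{\tau}(\vartheta;\rho_{\vartheta}) = 8\,\frac{1 - \hFide(\rho_{\vartheta}, \rho_{\vartheta+\tau})}{\tau^2}.
\end{equation}
Subtracting, the constant $1$ and the overall factor $8/\tau^2$ are common, so
\begin{equation}
\left| \hat{\mathcal{I}}_{\tau}(\vartheta, \rho_{\vartheta}(\bm{\gamma})) - \mathcal{I}_{\tau}(\vartheta, \rho_{\vartheta}(\bm{\gamma})) \right| = \frac{8}{\tau^2}\,\left| \FF(\rho_{\vartheta}, \rho_{\vartheta+\tau}) - \hFide(\rho_{\vartheta}, \rho_{\vartheta+\tau}) \right|.
\end{equation}

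Next I would invoke Theorem \ref{thm:fide_bound} with the role of $\rho$ played by $\rho_{\vartheta}(\bm{\gamma})$ and the role of $\kappa$ played by $\rho_{\vartheta+\tau}(\bm{\gamma})$: the QAE is trained on the probe-derived state $\rho_{\vartheta}(\bm{\gamma})$ with resulting loss $\delta$, so Theorem \ref{thm:fide_bound} gives $|\FF(\rho_{\vartheta}, \rho_{\vartheta+\tau}) - \hFide(\rho_{\vartheta}, \rho_{\vartheta+\tau})| \le \sqrt{2\delta}$. Substituting into the display above yields $|\hat{\mathcal{I}}_{\tau} - \mathcal{I}_{\tau}| \le 8\sqrt{2\delta}/\tau^2$, which is exactly the claimed bound. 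I would also remark that the rank $r$ of $\rho_{\vartheta}(\bm{\gamma})$ enters only through the applicability conditions of Theorem \ref{thm:fide_bound} (it must be at most $2^K$ so that the compressed state faithfully captures the spectrum), which is why $r$ appears in the statement even though it does not appear in the final inequality.

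The only genuine subtlety — and the step I would treat most carefully — is making sure the hypotheses of Theorem \ref{thm:fide_bound} are legitimately met in this composed setting: namely that feeding $\rho_{\vartheta}(\bm{\gamma})$ and $\rho_{\vartheta+\tau}(\bm{\gamma})$ into the QAE-based fidelity estimator really does produce an estimate $\hFide$ of the form \eqref{eqn:est-fide} with the training loss being the $\delta$ quoted in the lemma. This is essentially a bookkeeping matter: one checks that the probe state $\rho(\bm{\gamma})$ interacting with the source is still a valid $N$-qubit density matrix, that the QAE loss $\mathcal{L}(U(\bm{\theta}^{(T)}),\rho_{\vartheta}(\bm{\gamma})) = \delta$ is the quantity referenced, and that the estimator of Section \ref{append:sub:QAE-fide} is applied verbatim. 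Once that identification is made, the bound is immediate from Theorem \ref{thm:fide_bound} and the elementary manipulation above; there is no additional hard analysis. Finally, I would note that taking $\tau \to 0$ does not blow up the bound in any problematic way beyond what already happens for the approximation $\mathcal{I}_{\tau} \approx \mathcal{I}$ itself, since in practice $\delta$ and $\tau$ are chosen so that $\sqrt{\delta}/\tau^2$ stays controlled.
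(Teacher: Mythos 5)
Your proposal is correct and follows essentially the same route as the paper's own proof: both write $\hat{\mathcal{I}}_{\tau}$ as the QFI formula with $\FF$ replaced by $\hFide$, and then apply the bound $|\FF-\hFide|\leq\sqrt{2\delta}$ from Theorem \ref{thm:fide_bound} (Eqn.~(\ref{eqn:error_TFB_thm2_lowb})) to the common prefactor $8/\tau^2$. Your added remarks on verifying the hypotheses and on the role of $r$ are sensible but do not change the argument.
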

\noindent The above results imply that the estimation error of QFI is controlled by the quantity $\varsigma$. Moreover, the relation between the training loss $\delta$ and $\varsigma$ elucidated in Section~\ref{sec:implementation-QAE-FE} hints that the performance of the QAE-based QFI estimator can be improved by adopting problem-dependent ansatzes and advanced optimizers by suppressing $\delta$.  Last, supported by Lemma \ref{lem:DQC-1},  the necessary condition of our proposal to reach the potential merits is  $\sqrt{2\varsigma} <\text{poly}(1/N)$ when the total runtime is restricted to $O(\text{poly}(N))$.

\subsection{Numerical simulations}

\noindent\textbf{Problem setup}.  We apply the QAE-based QFI estimator to prepare the probe state $\rho(\bm{\gamma})$ that maximizes the QFI for a magnetic field described by the parameter $\vartheta$.  Mathematically, the probe state interacting with the magnetic field transforms to $\rho_{\vartheta}(\bm{\gamma})= W(\vartheta)\rho(\bm{\gamma})W(\vartheta)^{\dagger}$, where $W(\vartheta) =  e^{-i\vartheta G}$ is a unitary transformation, $G=\sum_{i=1}^J Z_i$ is the problem Hamiltonian, and $Z_i$ is the Pauli-Z matrix on the $i$-th qubit. We set $J=4$ and $\vartheta = 0.1$ to proceed the following simulations.

 \begin{figure}[h]
	\centering
\includegraphics[width=0.38\textwidth]{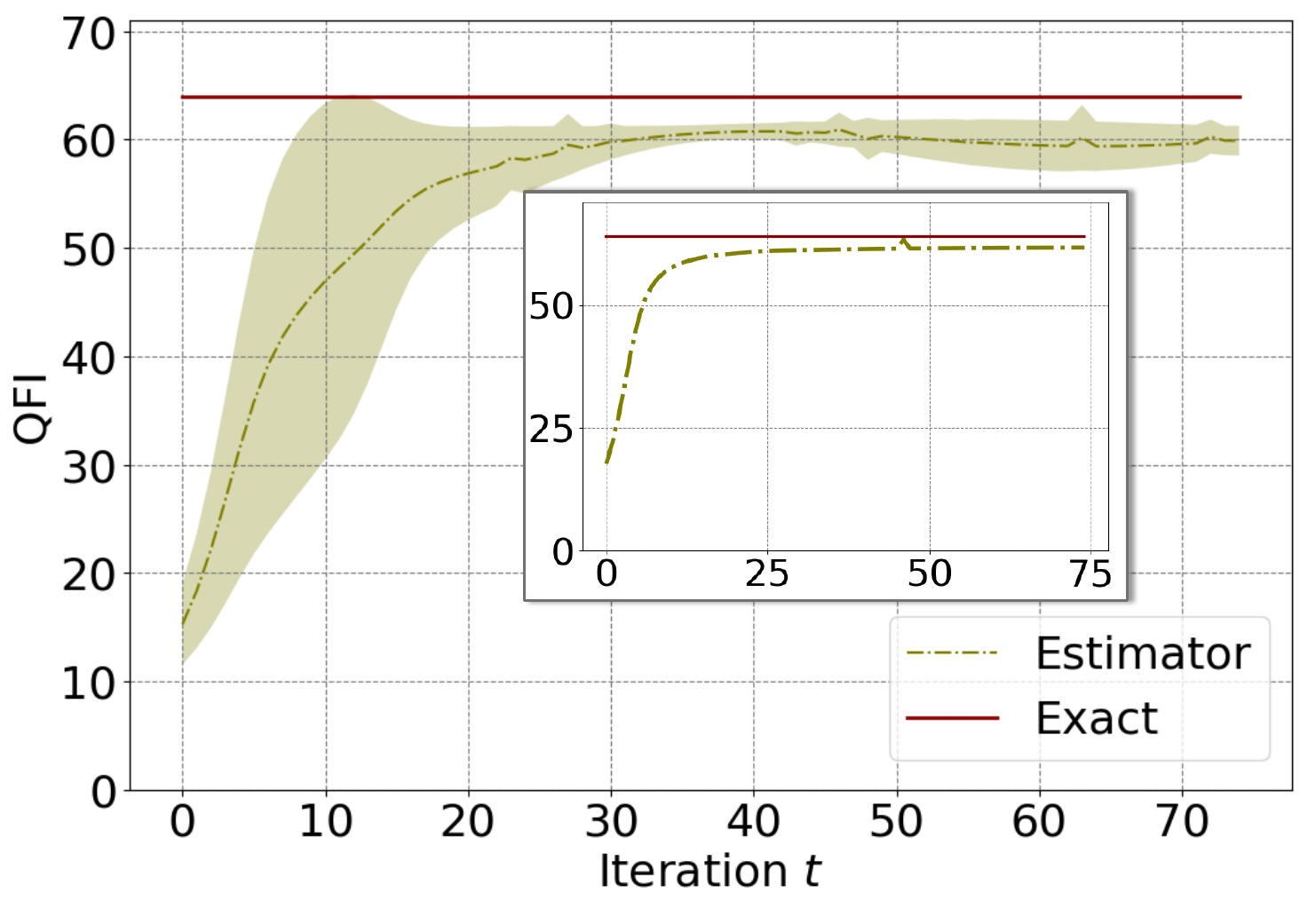}
\caption{\small{\textbf{The simulation results of the QAE-based QFI estimator}. The outer and inner plots depict the average and top-1 performance of the proposed QAE-based QFI estimator. The shaded region represents the standard deviation. The labels `Exact' and `Estimator' separately represent the maximum QFI that can be achieved and the estimated QFI returned by the proposed estimator.  }}
\label{fig:sim-QFI}
\end{figure}

\begin{figure*}
	\centering
\includegraphics[width=0.78\textwidth]{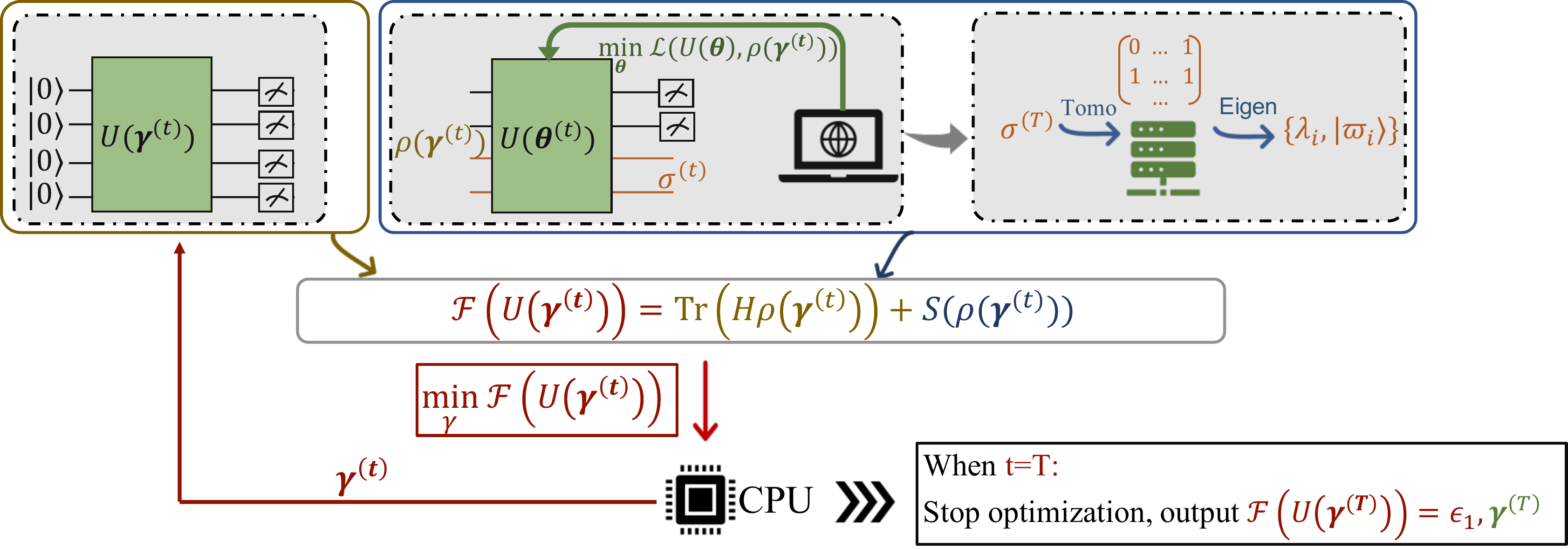}
	\caption{\small{\textbf{The paradigm of the QAE-based quantum Gibbs state solver}. The setup is composed of three components. The first component is the variational quantum circuit $U(\bm{\gamma}^{(t)})$ to prepare the variational quantum Gibbs state $\rho(\bm{\gamma}^{(t)})$. The second component is QAE, which is used to extract the spectral information of $\rho(\bm{\gamma}^{(t)})$. The third component is the classical processor to optimize the variational quantum circuit $U(\bm{\gamma}^{(t)})$ and the quantum encoder $U(\bm{\theta}^{(t)})$. See Section~\ref{sec:QAE-gibbs} for the learning procedure.  
	}}
	\label{fig:QAE-gibbs-scheme}
\end{figure*}

\noindent\textbf{Hyper-parameters settings.} We adopt the following hyper-parameter settings to conduct all simulations.  Specifically, the variational quantum circuits $U(\bm{\gamma})$ and $U(\bm{\theta})$, which are used to prepare the probe state $\rho_{\vartheta}(\bm{\gamma})$ and to build the quantum encoder respectively, follow the hardware-efficient ansatz in Eqn.~(\ref{eqn:append-MPCQ-def}). The layer number $L$ for $U(\bm{\gamma})$ and $U(\bm{\theta})$ is set as $6$ and $5$, respectively. A slight modification in building $U(\bm{\gamma})$ is removing all $R_Z$ gates in the single-qubit layer to accelerate training. For QAE, the number of latent qubits is set as $2$, the number of iterations is $300$, and in each iteration, the Adam optimizer is used to update $\bm{\theta}$ with the learning rate being $0.02$. For the probe state, the total number of iterations is $75$ and in each iteration, the parameters $\bm{\gamma}$ are updated by parameter shift rule with the learning rate being $0.08$. Each setting is repeated with five times to obtain the statistical results.

\noindent\textbf{Simulation results.} The simulation results are exhibited in Figure \ref{fig:sim-QFI}. Recall that a well-known conclusion in quantum metrology is that the optimal probe state is an $N$-qubit GHZ state in the noiseless scenario, where the QFI reaches the Heisenberg limit $4N^2$. Compared with the optimal result $64$, the average QFI of the optimized probe state oscillates around $60$ after $30$ iterations with the standard deviation up to $3$. The inner plot illustrates the top-1 performance of our proposal. In the $46$-th step, its QFI reaches  $63.317$.  The aforementioned observation highlights the feasibility of utilizing QAE-based QFI estimator for preparing the probe state, and suggests an intriguing avenue for future research to enhance the robustness and stability of the proposed QAE-based QFI estimator.

\section{QAE-based quantum Gibbs state solver} \label{sec:QAE-gibbs}

Gibbs state describes the thermal equilibrium properties of quantum systems. Efficient preparation of quantum Gibbs states is a central problem in quantum simulation, quantum optimization, and quantum machine learning \cite{poulin2009sampling}. As such, an efficient quantum Gibbs state preparation solver can benefit many  applications. For example, the quantum semi-definite programs exploit quantum Gibbs states to earn runtime speedups \cite{brandao2017quantum}.  The mathematical  expression of quantum Gibbs state  is 
\begin{equation}\label{eqn:def-gibbs}
\rho_G = \frac{e^{-\beta H}}{\Tr(e^{-\beta H})},	
\end{equation}
where $\beta$ refers to the inverse temperature and $H\in\mathbb{C}^{2^N\times 2^N}$ is a specified $N$-qubit Hamiltonian. However, theoretical results based on computational complexity theory \cite{Watrous2009} indicate that the preparation of quantum Gibbs states at low-temperature is difficult, i.e., computing the partition function $\Tr(e^{-\beta H})$ in two-dimensions is $\NP$-hard \cite{schuch2007computational}. Driven by the significance and the intrinsic hardness of preparation, Ref.~\cite{wang2020variational} explored how to use variational quantum circuits implemented on NISQ devices to generate a state $\rho(\bm{\gamma}^*)\in\mathbb{C}^{2^N\times 2^N}$ approximating $\rho_G$. In particular,   $\rho(\bm{\gamma}^*)$ minimizes the free energy $\mathcal{F}$ such that 
\begin{equation}\label{eqn:free-energy}
 \rho(\bm{\gamma}^*)=\arg\min_{\rho(\bm{\gamma})}\mathcal{F}(\rho(\bm{\gamma})),	
\end{equation} 
where $\mathcal{F}(\rho(\bm{\gamma}))=\Tr(H\rho(\bm{\gamma}))-\beta^{-1}S(\rho(\bm{\gamma}))$ and $S(\rho(\bm{\gamma}))=-\sum_{i}\lambda_i\log \lambda_i$ refers to the Von-Neumann entropy \cite{nielsen2010quantum}. To further improve the computational efficiency, Ref.~\cite{wang2020variational} replaces $S(\rho(\bm{\gamma}))$ with its truncated version, i.e., $S_R(\rho(\bm{\gamma}))= \sum_{j=0}^R C_j\Tr(\rho(\bm{\gamma})^{j+1})$ with $C_j$ being certain constants. When $S(\rho(\bm{\gamma}))$ is estimated by $S_2(\rho(\bm{\gamma}))$, the quantity used to optimize   $\rho(\bm{\gamma})$ yields $\arg\min_{\rho(\bm{\gamma})} \mathcal{F}(\rho(\bm{\gamma})):= \Tr(H\rho(\bm{\gamma}))-\frac{2\Tr(\rho(\bm{\gamma})^2)}{\beta} -\frac{\Tr(\rho(\bm{\gamma})^3) + 3}{2\beta}$.  Enlightened by the fact that $S(\rho(\bm{\gamma}))$ amounts to the calculation for the eigenvalues of $\rho(\bm{\gamma})$,  we apply QAE to estimate $S(\rho(\bm{\gamma}))$. Mathematically, the approximation of the free energy takes the form as
\begin{equation}\label{eqn:QAE-free-energy}
\hat{\mathcal{F}}(\rho(\bm{\gamma}))= \Tr(H\rho(\bm{\gamma})) -\beta^{-1}S(\hat{\rho}(\bm{\gamma})),
\end{equation}
where $S(\hat{\rho}(\bm{\gamma}))$ refers to the Von-Neumann entropy of the reconstructed state $\hat{\rho}(\bm{\gamma}$ returned by QAE with $S(\hat{\rho}(\bm{\gamma}))=-\sum_{i}\hat{\lambda}_i\log \hat{\lambda}_i$. This strategy allows a dramatically reduction of  the required quantum resources but the estimation accuracy can be improved. In the optimal case, the spectral information of the compressed state $\sigma^*$, i.e., a set of eigenvalues $\{\lambda_i\}_{i=1}^{r^*}$, is sufficient to recover  $S(\rho(\bm{\gamma}))$. 

\subsection{Implementation  of the QAE-based Gibbs state solver}
We now elaborate on the implementation  of the proposed QAE-based quantum Gibbs state solver shown in Figure \ref{fig:QAE-gibbs-scheme}. Suppose that the total number of iterations to optimize the free energy $\hat{\mathcal{F}}(\rho(\bm{\gamma}))$ is $T$. The mechanism of our proposal at the $t$-th iteration with $t\in[T]$ is as follows. 

\begin{enumerate}
	\item The variational quantum Gibbs state $\rho(\bm{\gamma}^{(t)})$ prepared by the variational quantum circuit $U(\bm{\gamma}^{(t)})$, i.e., $\rho(\bm{\gamma}^{(t)}) = \Tr_{A}(U(\bm{\gamma}^{(t)})(\ket{\bm{0}}\bra{\bm{0}})U(\bm{\gamma}^{(t)})^{\dagger})$, is interacted  with $H$ to compute $\Tr(H\rho(\bm{\gamma}^{(t)}))$, as highlighted in the brown box.
	\item  QAE is employed to compress the state $\rho(\bm{\gamma}^{(t)})$ prepared by $U(\bm{\gamma}^{(t)})$. Once the optimization of QAE is completed, the compressed state $\sigma^{(T)}$ is extracted into the classical form followed by the spectral decomposition $\{\hat{\lambda}_i\}$.  This step is highlighted in the blue box. 
	\item  Given access to $S(\hat{\rho}(\bm{\gamma}^{(t)}))$  and $\Tr(H\rho(\bm{\gamma}^{(t)}))$, the classical optimizer can minimize $\hat{\mathcal{F}}(\rho(\bm{\gamma}^{(t)}))$. The minimization  is achieved by the gradient-based methods,  i.e., the update rule at the $t$-th iteration  is 
\begin{equation}\label{eqn:QAE-update-gibbs}
	\bm{\gamma}^{(t+1)} = 	\bm{\gamma}^{(t)} - \eta \frac{\partial \hat{\mathcal{F}}(\rho(\bm{\gamma}^{(t)}))}{\partial \bm{\gamma}^{(t)}}.
\end{equation} 
\end{enumerate}  
After repeating the above procedures with in total $T$ iterations, the QAE-based quantum Gibbs state solver outputs the optimized free energy $\hat{\mathcal{F}}(\rho(\bm{\gamma}^{(T)}))$ and the trained parameters $\bm{\gamma}^{(T)}$.

We quantify the lower bound of the fidelity between the target state $\rho_G$ and the state $\rho(\hat{\bm{\gamma}})$ prepared by the QAE-based   quantum Gibbs state solver in the following lemma. The corresponding proof is provided in Appendix \ref{append:subsec:gibbs-lem1}.   
\begin{lem}\label{thm:QAE-Gibbs}
	Following notation in Eqn.~(\ref{eqn:free-energy}), let $\hat{\bm{\gamma}}$ be the optimized parameters of the QAE-based quantum Gibbs state solver. Suppose $|\hat{\mathcal{F}}(\rho(\hat{\bm{\gamma}}))-\mathcal{F}(\rho_G)|\leq \epsilon_1$ as the discrepancy between the optimized and the optimal free energy is less than the threshold $\epsilon_1$. Let $\delta$ be the training loss of QAE when compressing the state $\rho(\hat{\bm{\gamma}})$ and $\varsigma$ be the quantity defined in Eqn.~(\ref{eqn:est-fide-error-quantity}). The fidelity between the approximated state $\rho(\hat{\bm{\gamma}})$ returned by QAE-based quantum Gibbs state solver and the target state $\rho_G$ is lower bounded by
	\begin{equation}
	 	\FF(\rho_G, \rho(\hat{\bm{\gamma}})) \geq  1 - \sqrt{2\varsigma \left(1 + \log \frac{2^N-1}{1-\sqrt{1-\delta}} \right) + 2\beta \epsilon_1 }. 
	\end{equation} 
\end{lem}    
\noindent The achieved results demonstrate that even though the optimization of QAE is not optimal, the collected spectral information of the compressed state can also be used to quantify the Von-Von-NeumannNeumann entropy with the estimation guarantee, while the price to pay is decreasing the fidelity between $\rho(\hat{\bm{\gamma}})$ and $\rho_G$.  Furthermore, when the inverse temperature $\beta$, the gap $\epsilon_1$, and the training loss of QAE $\delta$ are small, the prepared state  $\rho(\hat{\bm{\gamma}})$ can well approximate the target state $\rho_G$ in Eqn.~(\ref{eqn:def-gibbs}). In this respect, as with prior two protocols, the key to gain computation advantages is exploiting advanced optimization techniques to improve the trainability of QAE and hence the performance of the QAE-based quantum Gibbs state solver.

\subsection{Numerical simulations}

We apply our proposal to prepare the Gibbs state corresponding to the Ising chain model and evaluate its performance.

\noindent\textbf{Construction rule of the target state.} The problem Hamiltonian in Eqn.~(\ref{eqn:def-gibbs}) is   $H_B= -\sum_{i=1}^J Z_iZ_{i+1}$ with $Z_{J+1}=Z_{1}$ and $Z_i$ is the Pauli-Z matrix on the $i$-th qubit. The quantum Gibbs state takes the form as $\rho_G = \frac{e^{-\beta H_B}}{\Tr(e^{-\beta H_B})} \in \mathbb{C}^{2^3\times 2^3}$. We adopt this construction rule to generate three target Gibbs states with $J=3$ and $\beta\in \{1.2,1.5,4\}$.

\noindent\textbf{Hyper-parameters setting.} 
We apply the same hyper-parameters setting to learn three Gibbs states. In particular, the variational Gibbs state $\rho(\bm{\gamma})$ is prepared by interacting $U(\bm{\gamma}^{(t)})$ with a $4$-qubit state $\ket{0}_A\ket{000}_B$ (the subscripts $A$ and $B$ refer to two subsystems), followed by tracing out the subsystem $A$, i.e., 
	$\rho(\bm{\gamma}^{(t)}) = \Tr_A(U(\bm{\gamma}^{(t)})(\ket{0}_A\ket{000}_B\bra{0}_A\bra{000}_B)U(\bm{\gamma}^{(t)})^{\dagger}).$ The trainable unitary $U(\bm{\gamma}^{(t)})$ is implemented by the hardware-efficient ansatz in Eqn.~(\ref{eqn:append-MPCQ-def}) with $L=5$.  The total number of iterations to optimize $\mathcal{F}(\rho(\bm{\gamma}^{(t)}))$ in Eqn.~(\ref{eqn:free-energy}) is set as $T=200$. The parameter shift rule is applied to update $\bm{\gamma}$ with the learning rate $\eta=0.2$. In each iteration $t$, QAE is used to acquire the spectral information of $\rho(\bm{\gamma}^{(t)})$. The hardware-efficient ansatz in Eqn.~(\ref{eqn:append-MPCQ-def}) is exploited to implement the quantum encoder  $U(\bm{\theta})$ with $L=4$. The number of latent qubits is set as $K=2$. The Adam optimizer is used to update $\bm{\theta}$ with the learning rate $\eta=0.02$. The   number of iterations to optimize $\bm{\theta}$ is set as $100$. The initialized parameters for both $\bm{\gamma}$ and $\bm{\theta}$ are uniformly sampled from the interval $[0, 1]$. We repeat each setting with five times to collect the statistical results.

\begin{figure}
	\centering
\includegraphics[width=0.385\textwidth]{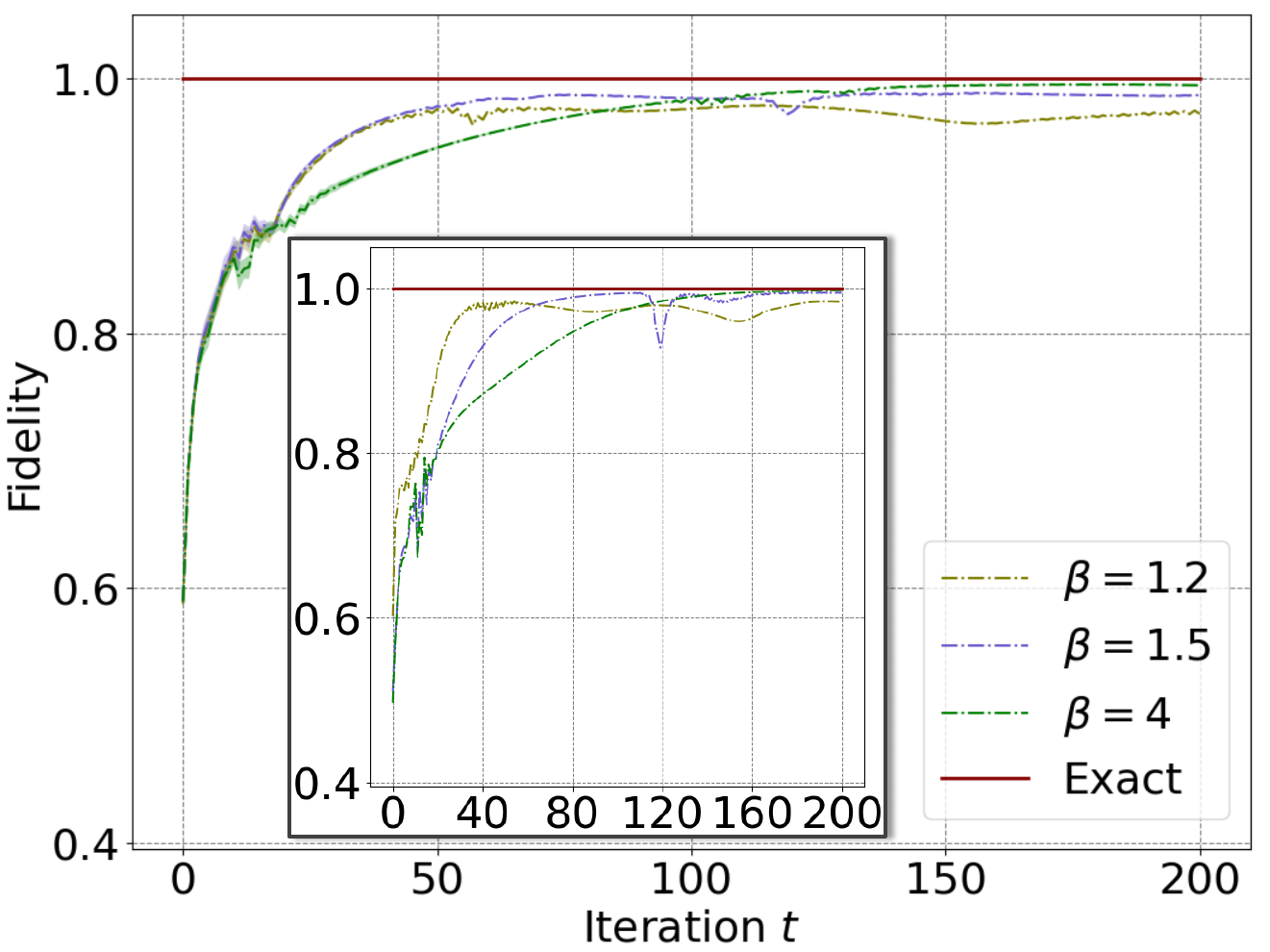}
\caption{\small{\textbf{The fidelity $\FF(\rho(\bm{\gamma}),\rho_G)$ with the varied $\beta$}. The label `$\beta=x$' represents how the average fidelity $\FF(\rho(\bm{\gamma}),\rho_G)$ evolves in $T=200$ iterations when the inverse temperature $\beta$ is set as $x$. The shadow region represents variance. The inner plot shows the top-1 performance of each setting.  }}
\label{fig:gibbs}
\end{figure}

\noindent\textbf{Simulation results.} The simulation results are illustrated in Figure \ref{fig:gibbs}. The outer plot exhibits the the average  fidelity between the prepared quantum Gibbs state $\rho(\bm{\gamma})$ and the target state $\rho_G$, i.e., for all settings  $\beta\in\{1.2,1.5,4\}$, the average fidelity is above $0.97$ after $200$ iterations. More concisely, the average fidelity $\FF(\rho(\bm{\gamma}),\rho_G)$ converges to $0.972$, $0.987$, and $0.994$ for the setting $\beta=1.2$, $1.5$, and $4$, respectively. The ignorable shaded region indicates the robustness of QAE-based Gibbs state solver. The inner plot depicts the top-1 performance of QAE-based Gibbs state solver with respect to $\beta=1.2$, $1.5$, and $4$. The corresponding fidelity is $0.983$, $0.994$, and $0.997$, respectively. The achieved high fidelity for all settings reflects the effectiveness of the proposed QAE-based  quantum Gibbs state solver. Moreover, the degraded performance with respect to the decreased $\beta$ accords  with Lemma \ref{thm:QAE-Gibbs}, i.e., a smaller $\beta$ leads to a worse fidelity bound. 

\section{Implementation on real quantum hardware}\label{sec:prac_implement} 
We now discuss the feasibility of implementing the proposed QAE-based learning protocols on NISQ machines.  QAE-based fidelity estimator requires access to quantum resources in Step (i) and Step (iii). In Step (i), QAE is employed to compress the input state, which can be accomplished by NISQ machines due to the flexibility of QAEs. Experimental studies have demonstrated the feasibility of executing QAEs on NISQ machines \cite{pepper2019experimental,lamata2018quantum}. In Step (iii) (or the evaluation of $\varsigma$ in Eqn.~(\ref{eqn:est-fide-error-quantity})), a quantum computer executes the SWAP test on the explored state $\kappa$  (or $\rho$)  and the reconstructed state $\ket{\bar{\omega}_i}\ket{0}^{\otimes N-K}$ \cite{subacsi2019entanglement}. Various platforms such as photons and superconducting quantum chips have experimentally demonstrated the implementation of SWAP test circuit \cite{cai2015entanglement,travnivcek2019experimental}. To conduct SWAP test, the eigenvector $\ket{\bar{\omega}_i}$ should be encoded into a $K$-qubit quantum state, which can be efficiently achieved using quantum state preparation methods since the number of latent qubits $K$ is small \cite{nielsen2010quantum,plesch2011quantum}.

We next discuss how to implement QAE-based QFI estimator and QAE-based Gibbs state solver on NISQ machines. The quantum computer plays a similar role in constructing these two learning protocols. First, quantum computer is employed to carry out on a variational quantum circuit, i.e., $U(\bm{\gamma})$, to prepare the  probe state and the variational Gibbs state, respectively. As with other variational quantum algorithms, we can employ hardware-efficient ansatz to implement $U(\bm{\gamma})$, since many experimental studies have  demonstrated the effectiveness of  hardware-efficient ansatz on NISQ machines \cite{kandala2017hardware}. Second, the prepared variational quantum states is fed into QAE to obtain the corresponding spectral information. As explained in the construction of QAE-based fidelity estimator, this procedure can also be efficiently  accomplished by NISQ machines. Taken together, QAE-based QFI estimator and Gibbs state solver can be realized on NISQ machines.
      
\section{Discussion and conclusion}\label{sec:conclusion}
In this study, we revisit QAE and demonstrate how to exploit its spectral properties to benefit quantum computation, quantum information, and quantum sensing tasks. We provide rigorous error bounds for the QAE-based learning protocols, which are crucial for utilizing QAE-based methods to solve practical tasks in quantum system learning. Additionally, we conduct extensive simulations to validate the effectiveness of our proposals. Theoretical analysis implies that the performance of the proposed QAE-based learning protocols can be further improved by adopting advanced optimization methods and ansatz to suppress the training loss. Empirical results validate our theoretical results.

There are two promising future research directions to further advance our protocols. The first future research direction is exploring advanced learning strategies that can effectively suppress the training loss  \cite{stokes2020quantum,wierichs2020avoiding,huang2022learning,wang2022symmetric}. The second future research direction is exploring effective initialization methods for the QAE-based quantum Fisher estimator and QAE-based quantum Gibbs solver \cite{cervera2021meta,jain2022graph,rudolph2023synergistic}. Without a suitable initialization, the prepared variational quantum state may become high rank, leading to a significant estimation error in the training process and resulting in inferior performance.
 
In the end, we would like to point out the potential of applying advanced dimension reduction techniques \cite{lin2010multiple,vidal2005generalized,ye2005idr,ye2006feature} to facilitate many other complicated learning problems in the quantum world such as Hamiltonian identification \cite{Yu2023Hamiltonian} and quantum state reconstruction \cite{Ghosh2021Reconstruct}. Besides, in quantum machine learning \cite{du2021efficient,huang2020experimental,qian2021dilemma,schuld2019quantum,shi2021quantum,wu2021expressivity,yu2019improved}, theoretical studies \cite{huang2021power,wang2021towards} proved that quantum kernels \cite{havlivcek2019supervised,hubregtsen2021training,kusumoto2021experimental,schuld2021supervised} may achieve better generalization ability than that of classical kernels. Considering that the core of quantum kernels is calculating the similarity of two quantum inputs, which amounts to accomplishing the fidelity estimation, efficient dimension reduction techniques  can boost the performance of quantum kernels.


\newpage

\appendices

\section{Proof of Theorem \ref{thm_QAE_noiseless}}\label{Appendix:thm1_opt_res}
\begin{proof}[Proof of Theorem \ref{thm_QAE_noiseless}]
Note that the loss  $\mathcal{L}$ defined in  Eqn.~(\ref{eqn:loss_G1}) is lower bounded by $0$, i.e.,
\begin{eqnarray}
&&\mathcal{L}(U(\bm{\theta}),\rho) \nonumber \\
= && \Tr\left(M_GU(\bm{\theta})\rho U(\bm{\theta})^{\dagger} \right) \nonumber \\
 = && \Tr\left(\left(\mathbb{I}_N- (\ket{0}\bra{0})^{\otimes(N-K)}\otimes \mathbb{I}_K  \right)U(\bm{\theta})\rho U(\bm{\theta})^{\dagger} \right) \nonumber\\
 \geq && 0,
\end{eqnarray}
 since both the measurement operator $M_G$ and the state $U(\bm{\theta})\rho U(\bm{\theta})^{\dagger}$ are positive semi-definite matrices. Such a result implies that the optimal parameters $\bm{\theta}^*$, or equivalently $U(\bm{\theta}^*)$, satisfy
 \begin{equation}
 	\mathcal{L}(U(\bm{\theta}^*),\rho)=0.
 \end{equation}

We next derive the explicit form of $U(\bm{\theta}^*)$ that enables $\mathcal{L}(U(\bm{\theta}^*),\rho)=0$. Recall that the loss function $\mathcal{L}(U(\bm{\theta}),\rho)$ can be rewritten as
\begin{equation}	\label{eqn:proof-thm1-1}
  \mathcal{L}(U(\bm{\theta}),\rho)  
	=   1 - \Tr\left(\left(\ket{0}\bra{0})^{\otimes(N-K)}\otimes \mathbb{I}_K \right)U(\bm{\theta})\rho U(\bm{\theta})^{\dagger} \right).
\end{equation}
To obtain $ \mathcal{L}(U(\bm{\theta}^*),\rho)=0$, it necessitates to ensure
\begin{equation}\label{eqn:thm1-loss-rew}
	\Tr\left(\left(\ket{0}\bra{0})^{\otimes(N-K)}\otimes \mathbb{I}_K  \right)U(\bm{\theta}^*)\rho U(\bm{\theta}^*)^{\dagger} \right) = 1.
\end{equation} 

Considering that any unitary operator is full rank and all its singular values equal to $1$, without loss of generality, we denote the singular value decomposition of the optimal quantum encoder  as 
\begin{equation}\label{eqn:opt_U_thm1} 
	U(\bm{\theta}^*) = \sum_{i=1}^{r^*}  \ket{0^{\otimes (N-K)},\varpi_i}\bra{\varphi_i} + \sum_{i=r^*+1}^{2^N}   \ket{\phi_i}\bra{\varphi_i}, 
\end{equation} 
where $r^*=2^K$, and $\{\ket{0}^{\otimes (N-K)}\ket{\varpi_i}\}\bigcup \{\ket{\phi_i}\}$  (or $\{\ket{\varphi_i}\}$) is a set of orthonormal basis to form the left (or right) singular vectors of $U(\bm{\theta}^*)$.

In conjunction with the above two equations, we obtain
\begin{eqnarray}\label{eqn:thm1-proof-2}
	&& \Tr\left(\left(\ket{0}\bra{0})^{\otimes(N-K)}\otimes \mathbb{I}_K  \right) U(\bm{\theta}^*)\rho U(\bm{\theta}^*)^{\dagger} \right) \nonumber\\
	= && \Tr\Big(\sum_{i=1}^{r^*}  \ket{0^{\otimes (N-K)},\varpi_i}\bra{\varphi_i}    	\rho U(\bm{\theta}^*)^{\dagger} \Big) \nonumber\\
	= && \Tr\Big(\sum_{j=1}^{r^*}  \ket{\varphi_j}\bra{0^{\otimes (N-K)},\varpi_i} \sum_{i=1}^{r^*}  \ket{0^{\otimes (N-K)},\varpi_i}\bra{\varphi_i}    	\rho   \Big) \nonumber\\
	= && \Tr\Big( \sum_{i=1}^{r^*}  \ket{\varphi_i}\bra{\varphi_i}    \rho \Big),
\end{eqnarray}
where the first equality uses the explicit form of $U(\bm{\theta}^*)$.

Recall $\rho=\sum_{i=1}^{r^*} \lambda_i \ket{\psi_i}\bra{\psi_i}$, where $\sum_{i=1}^{2^N} \lambda_i=1$, $0\leq \lambda_i\leq 1$ for $\forall i \in[2^N]$. Combining the explicit form of $\rho$ with the last term in Eqn.~(\ref{eqn:thm1-proof-2}), we obtain 
\begin{equation}\label{eqn:thm1-proof-3}
	\Tr\Big( \sum_{i=1}^{r^*}  \ket{\varphi_i}\bra{\varphi_i}    \rho \Big) = \sum_{i=1}\lambda_i \sum_{j=1}^{r^*}|\langle \varphi_j| \psi_i \rangle |^2.
\end{equation}  
Since $\sum_{i=1}^{r^*} \lambda_i=1$ and $\{\ket{\varphi_i}\}_{i=1}^{2^N}$ ($\{\ket{\psi_i}\}_{i=1}^{2^N}$) is a set of orthonormal basis, the term $\Tr\left( \sum_{i=1}^{r^*}  \ket{\varphi_i}\bra{\varphi_i}    \rho \right)$ equals to one if and only if $\sum_{j=1}^{r^*}|\langle \varphi_j| \psi_i \rangle |^2=1$. Due to the orthonormality of $\{\ket{\varphi_j}\}$ and $\{\ket{\psi_i}\}$,  such an equality can be achieved when $\ket{\psi_i}$ for $\forall i\in[r^*]$ can be linearly spanned by $\{\ket{\varphi_j}\}_{j=1}^{r^*}$, i.e., for $\forall i\in[r^*]$,  
\begin{equation}\label{eqn:opt_U_thm5}
	\ket{\psi_i} =  \ket{\varphi_{j'}},\quad \exists j'\in [r^*] 
\end{equation} 
which ensures $|\langle \varphi_{j'}| \psi_i \rangle |^2 = 1$.  	
	
Combining Eqn.~(\ref{eqn:opt_U_thm1}) and 	Eqn.~(\ref{eqn:opt_U_thm5}),  the optimal quantum encoder yields the form $\sum_{i=1}^{r^*}  \ket{0}^{\otimes (N-K)}\ket{\varpi_i}\bra{\psi_i}  + \sum_{i=r^*+1}^{2^N}   \ket{\phi_i}\bra{\psi_i}$. Since the set of orthonormal basis $\{\ket{\varpi_i}\}$ is non-unique, there are multiple critical points.

We end the proof by indicating that any other form of $U(\bm{\theta})$ differing with Eqn.~(\ref{eqn:opt_U_thm1}) does not promise the optimal result, i.e., $\mathcal{L}(U(\bm{\theta}), \rho)>0$. Specifically, we define such a quantum encoder as
\begin{equation}\label{eqn:opt_U_thm4}
	U(\bm{\theta}) = \sum_{i=1}^{2^N}  \ket{\phi_i'}\bra{\varphi_i'}, 
\end{equation}
 where $\{\ket{\phi_i'}\}$  (or $\{\ket{\varphi'_i}\}$) is a set of orthonormal basis to form the left (or right) singular vectors of $U(\bm{\theta})$. Notably, to guarantee that a subset  orthonormal basis in $\{\ket{\phi_i'}\}$ can not be spanned by $\{\bra{0^{\otimes (N-K)}, \varpi_{i}'}\}$, or equivalently, $U(\bm{\theta})\neq U(\bm{\theta}^*)$, we impose a restriction on $U(\bm{\theta})$  such that \begin{equation}\label{eqn:opt_U_thm3}
 	(\ket{0}\bra{0})^{\otimes(N-K)}\otimes \mathbb{I}_K  ) \sum_{j=1}^{2^N}      |\phi_j' \rangle\bra{\varphi_j'} \prec \sum_{j=1}^{2^N}  | 0^{\otimes (N-K)}, \varpi_{j}' \rangle \bra{\varphi_j'}. 
 \end{equation}

The restriction in Eqn.~(\ref{eqn:opt_U_thm3}) ensures the following result, i.e.,  
 \begin{eqnarray}\label{eqn:thm1-proof-4}
 &&	U(\bm{\theta})^{\dagger}(\ket{0}\bra{0})^{\otimes(N-K)}\otimes \mathbb{I}_K  )U(\bm{\theta})  \nonumber\\ 
 = &&  U(\bm{\theta})^{\dagger}(\ket{0}\bra{0})^{\otimes(N-K)}\otimes \mathbb{I}_K  ) (\ket{0}\bra{0})^{\otimes(N-K)}\otimes \mathbb{I}_K  )U(\bm{\theta})  \nonumber\\
\prec && \sum_{j=1}^{2^N}  \ket{\varphi_j'} \bra{ 0^{\otimes (N-K)}, \varpi_{j}' } \sum_{k=1}^{2^N}  | 0^{\otimes (N-K)}, \varpi_{k}' \rangle \bra{\varphi_k'} \nonumber\\
 = && \sum_{j=1}^{r^*}  \ket{\varphi'_j}\bra{\varphi'_j},
 \end{eqnarray}
 where the first equality uses the property of the projector with $\Pi^2=\Pi$, the first inequality employs Eqn.~(\ref{eqn:opt_U_thm3}), and the last equality uses the orthonormality of $\{\ket{\varpi_{k}'}\}$.   
 
Connecting Eqn.~(\ref{eqn:thm1-proof-2}) with Eqn.~(\ref{eqn:thm1-proof-4}), we conclude that when $U(\bm{\theta})$ satisfies the form in Eqn.~(\ref{eqn:opt_U_thm4}), the following relation holds
\begin{eqnarray}
	&&  \Tr\left(\left(\ket{0}\bra{0})^{\otimes(N-K)}\otimes \mathbb{I}_K  \right) U(\bm{\theta})\rho U(\bm{\theta})^{\dagger} \right) \nonumber\\	
	< &&  \Tr\Big( \sum_{i=1}^{r^*}  \ket{\varphi_i'}\bra{\varphi_i'}    \rho \Big) \nonumber\\
	 \leq &&  1.
\end{eqnarray}
Such a relation implies $\mathcal{L}(U(\bm{\theta}),\rho) >0$, according to Eqn.~(\ref{eqn:proof-thm1-1}). 
	
 \end{proof}

\section{Proof of  Lemma \ref{lem:sigma_eigen} }\label{append:proof_lem1}  
\begin{proof}[Proof of Lemma \ref{lem:sigma_eigen}]

Let us first derive the explicit form of the compressed state $\sigma^*$ based on the spectral decomposition of $U(\bm{\theta}^*)$ and $\rho$. Specifically, based on the explicit form of the optimal compressed state $\sigma^*$ in Lemma~\ref{lem:sigma_eigen}, we have
   \begin{eqnarray}
	\sigma^* = && \Tr_{E}\Big(\frac{M_G U(\bm{\theta}^*)\rho U(\bm{\theta}^*)^{\dagger} M_G^{\dagger}}{\Tr(M_G U(\bm{\theta}^*)\rho U(\bm{\theta}^*)^{\dagger})} \Big) \nonumber\\
	= && \Tr_E\Big( \sum_{i=1}^{r^*} \lambda_i \ket{0^{\otimes (N-K)},\varpi_i} \bra{0^{\otimes (N-K)},\varpi_i} \Big) \nonumber\\
	 = && \sum_{i=1}^{r^*} \lambda_i  \ket{\varpi_i}  \bra{\varpi_i},
\end{eqnarray}
where the second equality uses Eqn.~(\ref{eqn:eigen-rho}) and Eqn.~(\ref{eqn:thm1_0}); the third equality employs the property of orthonormal vectors, i.e., $\langle \psi_i|\psi_j \rangle =\bm{\delta}_{ij}$; the last equality comes from the property of partial trace. Recall the spectral decomposition of $\rho$ in Eqn.~(\ref{eqn:eigen-rho}). An immediate observation is $\sigma_*$ and $\rho$ share the same eigenvalues $\{\lambda_i\}_{i=1}^{r^*}$.

We next elucidate how to use $\ket{\varpi_i}$ and $U(\bm{\theta}^*)$ to recover the eigenvectors $\{\ket{\psi_i}\}$ of $\rho$. In particular, given access to   $\ket{\varpi_i}$, we can prepare the input state $\ket{0}^{\otimes N-K}\ket{\varpi_i}$ and operate it with the optimal unitary $U(\bm{\theta}^*)^{\dagger}$, where the generated state is 
\begin{eqnarray}
	&& U(\bm{\theta}^*)^{\dagger}\ket{0}^{\otimes N-K}\ket{\varpi_i} \nonumber \\ 
	= && 
	\Big(\sum_{i=1}^{r^*} \ket{\varphi_i}\bra{0}^{\otimes N-K}\bra{\varpi_i} + \sum_{j=r+1}^{2^N}\ket{\varphi_j}\bra{\phi_j}\Big)\ket{0}^{\otimes N-K}\ket{\varpi_i} \nonumber\\
	 = && \ket{\varphi_i},
\end{eqnarray} 
where the first equality employs the spectral form of $U(\bm{\theta}^*)$ in Eqn.~(\ref{eqn:thm1_0}). As explained in the proof of Theorem \ref{thm_QAE_noiseless} (i.e., Eqn.~(\ref{eqn:opt_U_thm5})), the eigenvectors $\{\ket{\psi_i}\}$ are linearly spanned by $\{\ket{\varphi_i}\}$. In other words, the prepared state $\ket{\varphi_i}$ can recover the spectral information of $\rho$.

\end{proof}

\section{Proof of Theorem \ref{thm:fide_bound} }\label{append:thm2}

Two results used in the proof of Theorem \ref{thm:fide_bound} are as follows.
\begin{lem}[Lemma 1, \cite{cerezo2020variational}]\label{lem:est_fide}
	For any positive semi-definite operators $A$, $B$, and $C$, where $C$ is normalized, i.e., $\Tr(C) = 1$, the difference of the fidelity $F(A,C)$ and the fidelity $F(B,C)$ yields 
	\begin{eqnarray}
		&& \left|\FF(A, C)-\FF(B,C)\right| \nonumber\\
 	 = &&  \|\sqrt{A}\sqrt{C}\|_1 - \|\sqrt{B}\sqrt{C}\|_1  \nonumber\\ 
 	 \leq &&  \sqrt{\Tr\Big((\sqrt{A} - \sqrt{B})^2 \Big)} \nonumber\\
 	 = && \sqrt{ 2\Big(1 - \Tr\Big(\sqrt{A} \sqrt{B} \Big) \Big)}.
	\end{eqnarray}
\end{lem}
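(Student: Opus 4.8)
The plan is to prove the displayed inequality by reducing it, via the definition $\FF(\rho,\kappa)=\|\sqrt{\rho}\sqrt{\kappa}\|_1$ from Eqn.~(\ref{eqn:fide}), to a single application of the Schatten-norm H\"older inequality. Abbreviating $X=\sqrt{A}$, $Y=\sqrt{B}$, $Z=\sqrt{C}$, all positive semi-definite, the object to bound is $\bigl|\,\|XZ\|_1-\|YZ\|_1\,\bigr|$; by symmetry I may assume $\FF(A,C)\geq\FF(B,C)$ so that the absolute value and the displayed difference coincide.

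First I would apply the reverse triangle inequality for the trace norm: since $\|\cdot\|_1$ is a genuine norm, $\bigl|\,\|XZ\|_1-\|YZ\|_1\,\bigr|\leq\|XZ-YZ\|_1=\|(X-Y)Z\|_1$. This linearizes the estimate and isolates the factor $X-Y=\sqrt{A}-\sqrt{B}$, so that what remains is to compare $\|(X-Y)Z\|_1$ with $\|X-Y\|_2$.

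The main step is H\"older's inequality for Schatten norms with the exponent decomposition $1=\tfrac12+\tfrac12$, namely $\|(X-Y)Z\|_1\leq\|X-Y\|_2\,\|Z\|_2$. I would either invoke this as standard or supply a one-line derivation from the variational formula $\|M\|_1=\sup_{U}|\Tr(UM)|$ (supremum over unitaries $U$) combined with Cauchy--Schwarz for the Hilbert--Schmidt inner product $\langle P,Q\rangle=\Tr(P^{\dagger}Q)$ and the unitary invariance of $\|\cdot\|_2$. Then I evaluate $\|Z\|_2=\sqrt{\Tr(Z^2)}=\sqrt{\Tr(C)}=1$ using that $C$ is normalized, which collapses the bound to $\|(X-Y)Z\|_1\leq\|X-Y\|_2=\sqrt{\Tr\bigl((\sqrt{A}-\sqrt{B})^2\bigr)}$ and thereby establishes the central inequality.

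Finally I would close with the stated identity: expanding $(\sqrt{A}-\sqrt{B})^2=A+B-\sqrt{A}\sqrt{B}-\sqrt{B}\sqrt{A}$ and taking the trace, cyclicity gives $\Tr\bigl((\sqrt{A}-\sqrt{B})^2\bigr)=\Tr(A)+\Tr(B)-2\Tr(\sqrt{A}\sqrt{B})$, which reduces to $2\bigl(1-\Tr(\sqrt{A}\sqrt{B})\bigr)$ precisely when $\Tr(A)=\Tr(B)=1$. I would flag that this last equality requires $A$ and $B$ to be normalized states, an assumption implicit in the intended application where $A,B$ are density operators. The only nontrivial ingredient is the Schatten--H\"older inequality, so the main obstacle is simply stating and justifying that step cleanly; the remaining manipulations are routine.
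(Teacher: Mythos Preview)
The paper does not supply its own proof of this lemma; it is quoted as Lemma~1 of \cite{cerezo2020variational} and used as a black box in the proof of Theorem~\ref{thm:fide_bound}. Your argument is correct and is the standard one: the reverse triangle inequality for $\|\cdot\|_1$ reduces the left-hand side to $\|(\sqrt{A}-\sqrt{B})\sqrt{C}\|_1$, then the Schatten--H\"older inequality with exponents $(2,2)$ together with $\|\sqrt{C}\|_2^2=\Tr(C)=1$ yields the middle bound. You also correctly flag that the final displayed equality $\Tr\bigl((\sqrt{A}-\sqrt{B})^2\bigr)=2\bigl(1-\Tr(\sqrt{A}\sqrt{B})\bigr)$ requires $\Tr(A)=\Tr(B)=1$, a hypothesis not stated in the lemma but satisfied in every application in the paper (both $\rho$ and $\hat\rho$ are density operators).
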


We are now ready to leverage Lemma \ref{lem:est_fide}  to prove Theorem \ref{thm:fide_bound}. 
\begin{proof}[Proof of Theorem \ref{thm:fide_bound}]

Here we evaluate the discrepancy between $\hFide(\rho,\kappa)$ and $\FF(\rho,\kappa)$. In particular, the difference $ |\hFide(\rho,\kappa) - \FF(\rho,\kappa)|$ yields 

\begin{eqnarray}\label{eqn:error_TFB_thm2_1}
 &&  \left|\hFide(\rho,\kappa) - \FF(\rho,\kappa) \right| \nonumber\\
	\leq && \sqrt{ 2\Big(1 - \Tr\Big(\sqrt{\hat{\rho}} \sqrt{\rho} \Big) \Big)} \nonumber\\
	\leq && \sqrt{   \Tr(|\hat{\rho} - \rho|) }, 
\end{eqnarray}
where the first inequality uses the result of Lemma \ref{lem:est_fide}, and the second  inequality exploits  $ 1 - \Tr(\sqrt{\hat{\rho}} \sqrt{\rho} ) \leq \frac{1}{2} \Tr(|\hat{\rho} - \rho|)$  supported by \cite{audenaert2012comparisons}.  
 
Then, according to the Fuchs-van de Graaf's inequality \cite{nielsen2010quantum}, we have
\begin{equation}\label{eqn:varsigna-fidelity}
	\frac{1}{2}\Tr(|\rho - \hat{\rho}|) \leq \sqrt{1 - \FF^2(\hat{\rho}, \rho)} = \varsigma.
\end{equation}

In conjunction with the above results, we obtain that the fidelity $\FF(\rho,\kappa)$ is bounded by the estimated fidelity with an additive error term, i.e., 
\begin{equation}\label{eqn:error_TFB_thm2_lowb}
	\hFide(\rho,\kappa) - \sqrt{\varsigma} \leq  \FF(\rho, \kappa)\leq  \hFide(\rho,\kappa) + \sqrt{2\varsigma}.
\end{equation}

\end{proof}  

\section{On the relation of the training loss $\delta$ and the quantity $\varsigma$}\label{append:train-loss-vs-error-quantity}

In this section, we interpret the training loss of QAE $\mathcal{L}$ in Eqn.~(\ref{eqn:loss_G1}), the trace distance between the target state $\rho$ and the reconstructed state of QAE $\hat{\rho}$ (i.e., $\|\rho - \hat{\rho}\|_1/2$),  and the error quantity $\varsigma$ in Eqn.~(\ref{eqn:est-fide-error-quantity}) yields the following relation
\begin{equation}\label{eqn:relation-loss-varsigma}
	1 - \sqrt{1 - \delta} \leq \frac{1}{2}\|\rho - \hat{\rho}\|_1 \leq \varsigma.
\end{equation} 
This relation can be attained by separately showing (i) $\|\rho - \hat{\rho} \|_1/2\leq \varsigma$; (ii) $1 - \sqrt{1 - \delta} \leq \|\rho - \hat{\rho} \|_1/2$.

\noindent\textit{(i) $\|\rho - \hat{\rho} \|_1/2 \leq \varsigma$}. This relation has been proved in Eqn.~(\ref{eqn:varsigna-fidelity}).

\noindent\textit{(ii) $\delta \leq \|\rho - \hat{\rho} \|_1$}. To achieve this relation, we only need to illustrate that the training loss of QAE is upper bounded by the infidelity of the target state and the reconstructed state, i.e., 
\begin{equation}\label{eqn:train-loss-infide}
	1 - \sqrt{1 - \delta} \leq 1- \FF(\rho, \hat{\rho})
\end{equation}
Then, according to the Fuchs-van de Graaf's inequality \cite{nielsen2010quantum} with
\begin{equation}
	1- \FF(\rho, \hat{\rho}) \leq \frac{1}{2}\|\rho - \hat{\rho}\|_1,
\end{equation}
it immediately obtains $\delta \leq \|\rho - \hat{\rho}\|/2$.

The inequality in Eqn.~(\ref{eqn:train-loss-infide}) is hold due to the monotonicity of fidelity. Note that the training loss of QAE can be reformulated as
\begin{eqnarray}
	&& \mathcal{L}(U(\bm{\theta}),\rho) \nonumber\\
=  && \Tr\left((\mathbb{I}_N-(\ket{0}\bra{0})^{\otimes(N-K)}\otimes \mathbb{I}_K)U(\bm{\theta})\rho U(\bm{\theta})^{\dagger} \right)\nonumber\\
= &&  1 - \Tr\left(\ket{0}\bra{0})^{\otimes(N-K)} \Tr_{\bar{E}}(U(\bm{\theta})\rho U(\bm{\theta})^{\dagger})\right) \nonumber\\
= &&   1 - \FF^2((\ket{0}\bra{0})^{\otimes(N-K)}, \Tr_{\bar{E}}(U(\bm{\theta})\rho U(\bm{\theta})^{\dagger})) 
\end{eqnarray}
where the second equality uses the property of partial trace and the operation $\Tr_{\bar{E}}(\cdot)$ follows Eqn.~(\ref{eqn:sigma_T_1}), which refers to the partial trace on $K$ latent qubits, and the third inequality employs the definition of fidelity when one state is the pure state. In other words, we have
\begin{equation}
	\FF((\ket{0}\bra{0})^{\otimes(N-K)}, \Tr_{\bar{E}}(U(\bm{\theta})\rho U(\bm{\theta})^{\dagger})) = \sqrt{1 - \delta}. 
\end{equation}

Moreover, the fidelity between the reconstructed state and the target state satisfies the following relation, i.e.,
\allowdisplaybreaks
\begin{eqnarray}
	&& 1 -  \FF(\rho, \hat{\rho}) \nonumber\\
= &&  1 - \FF\left(\rho, U(\bm{\theta})^{\dagger}\left((\ket{0}\bra{0})^{\otimes(N-K)}\otimes \Tr_{E}(U(\bm{\theta})\rho U(\bm{\theta})^{\dagger})\right)U(\bm{\theta})\right) \nonumber\\
= &&  1 - \FF\left((\ket{0}\bra{0})^{\otimes(N-K)}\otimes \Tr_{E}(U(\bm{\theta})\rho U(\bm{\theta})^{\dagger}), U(\bm{\theta}) \rho U(\bm{\theta})^{\dagger}\right) \nonumber\\
\geq && \FF((\ket{0}\bra{0})^{\otimes(N-K)}, \Tr_{\bar{E}}(U(\bm{\theta})\rho U(\bm{\theta})^{\dagger})), \nonumber
\end{eqnarray}
where the first equality follows the explicit form of the reconstructed state, the second equality uses the invariant property of fidelity with respect to an isometry, and the inequality comes from the monotonicity of fidelity as $\FF(\rho, \sigma)\leq \FF(\mathcal{N}(\rho), \mathcal{N}(\sigma))$ with $\mathcal{N}$ being a feasible quantum channel.

In conjunction with the above two equations, it is evident that 
\begin{equation}
	1 - \FF(\rho, \hat{\rho}) \geq 1 - \sqrt{1 - \delta}.
\end{equation}

\section{Proof of Lemma \ref{lem:QFI_QAE}}\label{append:sec:lem-QFI}

\begin{proof}[Proof of Lemma \ref{lem:QFI_QAE}]
	The results shown in Lemma \ref{lem:QFI_QAE} can be easily obtained by leveraging Theorem \ref{thm:fide_bound}. Recall the explicit form of the estimated QFI, i.e.,
	\begin{equation}
		\hat{\mathcal{I}}_{\tau}(\vartheta, \rho_{\vartheta}(\bm{\gamma})) = 8\frac{1 - \hFide(\rho_{\vartheta}(\bm{\gamma}), \rho_{\vartheta+\tau}(\bm{\gamma}))}{\tau^2}.
	\end{equation}
Combining the above equation with Eqn.~(\ref{eqn:error_TFB_thm2_lowb}), we have
\begin{equation}
\begin{cases}
	\hat{\mathcal{I}}_{\tau}(\vartheta, \rho_{\vartheta}(\bm{\gamma})) \leq    \mathcal{I}_{\tau}(\vartheta, \rho_{\vartheta}(\bm{\gamma})) + \frac{8\sqrt{2\varsigma}}{\tau^2} \\
	\hat{\mathcal{I}}_{\tau}(\vartheta, \rho_{\vartheta}(\bm{\gamma})) \geq  \mathcal{I}_{\tau}(\vartheta, \rho_{\vartheta}(\bm{\gamma}))- 	\frac{8\sqrt{2\varsigma}}{\tau^2}. 
\end{cases}
\end{equation}	
After simplification, we have 
\begin{equation}
\hat{\mathcal{I}}_{\tau}(\vartheta, \rho_{\vartheta}(\bm{\gamma})) -  \frac{8\sqrt{2\varsigma}}{\tau^2} \leq \mathcal{I}_{\tau}(\vartheta, \rho_{\vartheta}(\bm{\gamma})) \leq  \hat{\mathcal{I}}_{\tau}(\vartheta, \rho_{\vartheta}(\bm{\gamma})) + \frac{8\sqrt{2\varsigma}}{\tau^2}.\nonumber  
\end{equation} 
	 
\end{proof}

\section{Proof of Lemma \ref{thm:QAE-Gibbs}}\label{append:subsec:gibbs-lem1}

Before presenting the proof of  \ref{thm:QAE-Gibbs}, let us first present a necessary lemma, which connects the relative entropy with the fidelity. 
\begin{lem}[Lemma 2, \cite{wang2020variational}]\label{lem:VGS1}
	Given quantum states $\rho$ and $\kappa$ and a constant $\epsilon_2$, suppose that the relative entropy $S(\rho||\kappa)$ is less than $\epsilon_2$. Then the fidelity between $\rho$ and $\kappa$ satisfies
	\begin{equation}
		\FF(\rho, \kappa)\geq 1 - \sqrt{2S(\rho||\kappa)}\geq 1 - \sqrt{2\epsilon_2}.
	\end{equation}
\end{lem}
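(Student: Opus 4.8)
The plan is to derive the claimed bound by composing two standard quantitative relations between distinguishability measures: a quantum Pinsker inequality controlling the trace distance by the relative entropy, and a Fuchs--van de Graaf inequality controlling the fidelity by the trace distance. Since the statement is quoted verbatim from \cite{wang2020variational}, one could simply invoke it; the sketch below records the route I would take to reconstruct it.

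First I would invoke the quantum Pinsker inequality: for any two density operators $\rho$ and $\kappa$,
\begin{equation}
	\|\rho - \kappa\|_1 \le \sqrt{2\, S(\rho\|\kappa)},
\end{equation}
where $S(\rho\|\kappa)=\Tr\!\left(\rho(\ln\rho-\ln\kappa)\right)$ is the quantum relative entropy taken with the natural logarithm. Under the hypothesis $S(\rho\|\kappa)\le\epsilon_2$ this already gives $\|\rho-\kappa\|_1\le\sqrt{2\epsilon_2}$. Next I would use the lower Fuchs--van de Graaf inequality, which in the square-root normalization $\FF(\rho,\kappa)=\|\sqrt{\rho}\sqrt{\kappa}\|_1$ adopted in Eqn.~(\ref{eqn:fide}) reads $1-\FF(\rho,\kappa)\le\frac{1}{2}\|\rho-\kappa\|_1$, hence a fortiori $\FF(\rho,\kappa)\ge 1-\|\rho-\kappa\|_1$. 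Combining the two bounds yields
\begin{equation}
	\FF(\rho,\kappa) \ge 1 - \|\rho-\kappa\|_1 \ge 1 - \sqrt{2\, S(\rho\|\kappa)} \ge 1 - \sqrt{2\epsilon_2},
\end{equation}
which is exactly the assertion. Equivalently, one may route through the squared fidelity via $\FF(\rho,\kappa)^2\ge 1-\|\rho-\kappa\|_1$ (itself a consequence of Fuchs--van de Graaf) together with $\FF\ge\FF^2$; either path produces the same final inequality.

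There is no substantive obstacle here: the argument is a two-line composition of textbook inequalities. The only points requiring care are bookkeeping ones: (i) fixing the fidelity convention so that the Fuchs--van de Graaf constant is the one used above (the paper's $\FF$ is the ``square-root'' Uhlmann fidelity, not its square), and (ii) using the natural logarithm in $S(\rho\|\kappa)$ so that Pinsker carries the factor $2$. With these fixed, the stated constant $\sqrt{2}$ is correct and in fact slightly loose, since the direct Fuchs--van de Graaf step alone already gives the stronger bound $\FF(\rho,\kappa)\ge 1-\sqrt{S(\rho\|\kappa)/2}$.
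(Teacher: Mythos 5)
Your proof is correct. Note that the paper itself gives no proof of this lemma---it is imported verbatim as Lemma~2 of \cite{wang2020variational}---and your reconstruction via the quantum Pinsker inequality $\|\rho-\kappa\|_1\le\sqrt{2S(\rho\|\kappa)}$ followed by the lower Fuchs--van de Graaf bound $1-\FF(\rho,\kappa)\le\frac{1}{2}\|\rho-\kappa\|_1$ is exactly the standard derivation used in that source; your closing observation that the chain actually yields the sharper constant $\FF(\rho,\kappa)\ge 1-\sqrt{S(\rho\|\kappa)/2}$ is also accurate.
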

 
We next present the proof of Lemma \ref{thm:QAE-Gibbs}.
\begin{proof}[Proof of Lemma \ref{thm:QAE-Gibbs}]
	Supported by Lemma \ref{lem:VGS1}, we have 
	\begin{equation}\label{eqn:thm3_gibbs_0}
		\FF(\rho_G, \rho(\hat{\bm{\gamma}}))\geq 1 - \sqrt{2S(\rho(\hat{\bm{\gamma}})||\rho_G)}.  
	\end{equation} 
In other words, the lower bound of $\FF(\rho_G, \rho(\hat{\bm{\gamma}}))$ can be achieved by deriving the upper bound of $S(\rho(\hat{\bm{\gamma}})||\rho_G)$. Specifically, we have
\begin{eqnarray}\label{eqn:thm3_gibbs_1}
	&& S(\rho(\hat{\bm{\gamma}})||\rho_G) \nonumber\\
	 = && \beta|\mathcal{F}(\rho(\hat{\bm{\gamma}})) - \mathcal{F}(\rho_G)| \nonumber\\
	= &&  \beta\left|\mathcal{F}(\rho(\hat{\bm{\gamma}})) - \hat{\mathcal{F}}(\rho(\hat{\bm{\gamma}})) + \hat{\mathcal{F}}(\rho(\hat{\bm{\gamma}}))   - \mathcal{F}(\rho_G)\right| \nonumber\\
	 \leq &&  \beta\left| \mathcal{F}(\rho(\hat{\bm{\gamma}})) -  \hat{\mathcal{F}}(\rho(\hat{\bm{\gamma}}))  \right| + \beta\left| \hat{\mathcal{F}}(\rho(\hat{\bm{\gamma}}))  -  \mathcal{F}(\rho_G)  \right| \nonumber\\
	 \leq &&  \beta\left| \mathcal{F}(\rho(\hat{\bm{\gamma}})) -  \hat{\mathcal{F}}(\rho(\hat{\bm{\gamma}}))  \right| + \beta \epsilon_1, 
\end{eqnarray}  	     
where the first equality stems from $\mathcal{F}(\rho)=\mathcal{F}(\sigma)+\beta^{-1}S(\rho||\sigma)$ \cite{wilming2017axiomatic}, the first inequality uses the triangle inequality, and the second inequality uses the condition $|\hat{\mathcal{F}}_R(\rho(\hat{\bm{\gamma}})) - \mathcal{F}(\rho_G) |\leq \epsilon_1$.

In conjunction with Eqns.~(\ref{eqn:thm3_gibbs_0}) and (\ref{eqn:thm3_gibbs_1}), the quantification of the lower bound of $\FF(\rho_G, \rho(\hat{\bm{\gamma}}))$ amounts to upper bounding the difference between $\mathcal{F}(\rho(\hat{\bm{\gamma}}))$ and  $\hat{\mathcal{F}}(\rho(\hat{\bm{\gamma}}))$. Following the explicit form of $\mathcal{F}$, the explicit form of the difference between $\mathcal{F}_R(\rho(\bm{\gamma}))$ and  $	\hat{\mathcal{F}}_R(\rho(\bm{\gamma}))$ yields
\begin{eqnarray}\label{eqn:lem7_eqn1}
	&& \left|\mathcal{F}(\rho(\bm{\gamma})) - 	\hat{\mathcal{F}}(\rho(\bm{\gamma})) \right|\nonumber\\
= &&  |S(\rho)- S(\hat{\rho})| \nonumber\\
\leq && \frac{1}{2}\|\rho - \hat{\rho}\|_1 +  \frac{1}{2}\|\rho - \hat{\rho}\|_1 \log(2^N-1) + \frac{1}{2}\|\rho - \hat{\rho}\|_1 \log \frac{2}{\|\rho - \hat{\rho}\|_1} \nonumber\\
\leq && \varsigma  + \varsigma \log(2^N-1) + \varsigma\log \frac{1}{\delta}, 
\end{eqnarray}
where the first equality comes from the fact that the difference between $\mathcal{F}(\rho(\bm{\gamma}))$ and $\hat{\mathcal{F}}(\rho(\bm{\gamma}))$ originates from the estimation error of Von-Neumann entropy, the second inequality leverages the result of Ref.~\cite{reeb2015tight} to upper bound the entropy difference by the trace distance, and the last inequality adopts the relation $1 - \sqrt{1- \delta}\leq \frac{1}{2}\|\rho - \hat{\rho}\|_1\leq \sqrt{1 - \FF^2(\rho, \hat{\rho})}=\varsigma$ as explained in Appendix~\ref{append:train-loss-vs-error-quantity}.

Taken together, we achieve 
\begin{equation}\label{eqn:lem_QAE_Gibbs_4}
	\FF(\rho_G, \rho(\hat{\bm{\gamma}})) \geq  1 - \sqrt{2\varsigma \left(1 + \log \frac{2^N-1}{ 1 - \sqrt{1- \delta}} \right) + 2\beta \epsilon_1 }.
\end{equation}

\end{proof}   
 
\end{document}